\newcommand{\ket}[1]{\mbox{$ | #1 \rangle $}}
\newcommand{\bra}[1]{\mbox{$ \langle #1 | $}}
\newcommand{\braket}[2]{\mbox{$ \langle #1 | #2 \rangle $}}
\newcommand{\tr}{\mathrm{tr}}
\newcommand{\cP}{{\cal P}}
\newcommand{\caZ}{\mathcal{Z}}
\newtheoremstyle{note}
  {\topsep/2}              	
  {\topsep/2}            	
  {}                        
  {\parindent}             	
  {\itshape}                
  {.}                       
  {5pt plus 1pt minus 1pt}  
  {}
\newtheorem{theorem}{Theorem}
\newtheorem{lemma}{Lemma}
\theoremstyle{definition}
\theoremstyle{remark}
\begin{document}
\title{Efficient verification of Dicke states}

\author{Ye-Chao Liu}
\thanks{These authors contributed equally to this work.}
\affiliation{Key Laboratory of Advanced Optoelectronic Quantum Architecture and Measurement,
Ministry of Education and School of Physics, Beijing Institute of Technology, Beijing 100081, China}

\author{Xiao-Dong Yu}
\thanks{These authors contributed equally to this work.}
\affiliation{Naturwissenschaftlich-Technische Fakult\"at, Universit\"at Siegen,
Walter-Flex-Str. 3, D-57068 Siegen, Germany}

\author{Jiangwei Shang}
\email{jiangwei.shang@bit.edu.cn}
\affiliation{Key Laboratory of Advanced Optoelectronic Quantum Architecture and Measurement,
Ministry of Education and School of Physics, Beijing Institute of Technology, Beijing 100081, China}
\affiliation{State Key Laboratory of Surface Physics, Fudan University, Shanghai 200433, China}

\author{Huangjun Zhu}
\email{zhuhuangjun@fudan.edu.cn}
\affiliation{Department of Physics and Center for Field Theory and Particle Physics, Fudan University, Shanghai 200433, China}
\affiliation{State Key Laboratory of Surface Physics, Fudan University, Shanghai 200433, China}
\affiliation{Institute for Nanoelectronic Devices and Quantum Computing, Fudan University, Shanghai 200433, China}
\affiliation{Collaborative Innovation Center of Advanced Microstructures, Nanjing 210093, China}

\author{Xiangdong Zhang}
\email{zhangxd@bit.edu.cn}
\affiliation{Key Laboratory of Advanced Optoelectronic Quantum Architecture and Measurement,
Ministry of Education and School of Physics, Beijing Institute of Technology, Beijing 100081, China}

\date{\today}
%

\begin{abstract}
  Among various multipartite entangled states, Dicke states stand out
  because their entanglement is maximally persistent and robust under particle losses.
  Although much attention has been attracted for their potential
  applications in quantum information processing and foundational studies,
  the characterization of Dicke states remains as a challenging task in experiments.
  Here, we propose efficient and practical protocols for verifying arbitrary
  $n$-qubit Dicke states in both adaptive and nonadaptive ways.
  Our protocols require only two distinct settings based on Pauli measurements
  besides permutations of the qubits. To achieve infidelity $\epsilon$ and confidence
  level $1-\delta$, the total number of tests required is only
  $O(n\epsilon^{-1}\ln\delta^{-1})$.
  This performance is exponentially  more efficient than all previous protocols based on
  local measurements, including quantum state tomography and direct fidelity estimation,
  and is comparable to the best global strategy.
  Our protocols are readily applicable with current experimental techniques and
  are able to verify Dicke states of hundreds of qubits.
\end{abstract}

\maketitle
%

\section{Introduction}
Multipartite quantum states with different types of entanglement are of pivotal
interest in various quantum information processing tasks as well as foundational studies.
Efficient and reliable characterization of these states  plays a crucial role in various
applications.
The standard approach is to fully reconstruct the density
matrix by quantum state tomography \cite{QSE}. However, tomography is both time consuming
and computationally hard due to the exponentially increasing number of parameters
to be reconstructed \cite{Nature438.643,APG_Shang}. Thus, a lot of efforts have
been devoted to searching for non-tomographic methods. Along this research
line there are, for instance, direct entanglement detection
\cite{PRL94.060501, Guehne.Toth2009, Dimic.Dakic2018, arXiv:1809.05455}, direct fidelity estimation (DFE) \cite{PRL106.230501}, self-testing \cite{Mayers2004, Coladangelo2017},
as well as quantum state verification
\cite{PRA96.062321,PRL120.170502,PRX8.021060,arXiv:1806.05565,arXiv:1901.09856,
Li.etal2019,Wang.Hayashi2019,Zhu2019,arXiv:1909.01943}. The latter one aims at
devising efficient protocols for verifying the target states by employing local measurements.
Up to now, efficient (or even optimal) verification protocols for bipartite pure states
have been proposed using both nonadaptive \cite{PRL120.170502,Zhu2019} and
adaptive measurements \cite{arXiv:1901.09856,Li.etal2019,Wang.Hayashi2019}.
Some of these protocols have also been implemented in experiments very recently
\cite{Zhang.etal2019}.
For multipartite states, efficient protocols are known only when the
states admit a stabilizer description, e.g., graph and
hypergraph states \cite{PRL120.170502,PRA96.062321,PRX8.021060,arXiv:1806.05565}.

However, most multipartite states do not admit a stabilizer description,
among which Dicke states \cite{Dicke1954} stand out
particularly as their entanglement is maximally persistent and robust
under particle losses \cite{PRL86.910,PRA78.060301}. Such states are key
resources in various tasks in  quantum information processing, such as multiparty quantum
communication and quantum metrology
\cite{PRL96.100502,PRL98.063604,PRA59.156,PRL103.020503,PRL103.020504,RMP.90.035005}.
In general, an $n$-qubit Dicke state with $k$ excitations is defined as
\begin{equation}\label{eq:Dicke}
  \ket{D_n^k}=\frac1{\sqrt{C_n^k}}\sum_l \cP_l\left\{\ket{1}^{\otimes k}\otimes\ket{0}^{\otimes (n-k)}\right\}\!,
\end{equation}
where $\sum_l \cP_l\{\cdot\}$ denotes the sum over all possible
permutations, and $C_n^k\equiv\binom{n}{k}$ is the binomial coefficient.
When $k=1$, Dicke states are also known as $W$ states \cite{Nature438.643},
\begin{equation}\label{eq:Wn}
  \ket{W_n}=\frac1{\sqrt{n}}(\ket{10\dots0}+\ket{01\dots0}+\cdots+\ket{00\dots1})\,.
\end{equation}

First investigated by Dicke in 1954 for describing light emission from a cloud
of atoms \cite{Dicke1954}, the preparation and characterization of Dicke states
have drawn a lot of theoretical and experimental interest.
Dicke states are relatively easy to generate in experiments
\cite{Nature438.643,PRL98.063604}, for instance Dicke states with up to six
photons have been observed  in photonic systems \cite{PRL103.020503,PRL103.020504}.
Very recently, Dicke states with more than $10\,000$ spin-1 atoms have been successfully
demonstrated in a rubidium condensate \cite{PNAS115.6381}.
In addition, tomography \cite{PRL105.250403,NJP14.105001}, entanglement
characterization
\cite{Guehne.etal2007b,JOSAB24.275,NJP11.083002,PRA83.040301,PRA88.012305,JPA46.385304},
and self-testing \cite{Supic.etal2018,Fadel2017} of
Dicke states can be simplified because of their permutation symmetry.
However, it is quite challenging to verify Dicke states of large quantum systems, and the resource overhead increases exponentially with the number of excitations $k$ even with the best protocols known so far \cite{PRL106.230501}.

In this work, we propose efficient and practical
protocols for verifying arbitrary $n$-qubit Dicke states, including $W$ states, using
both adaptive and nonadaptive measurements. These protocols require only two
distinct measurement settings if permutations of qubits can be realized, and in total
$O(n\epsilon^{-1}\ln\delta^{-1})$ tests suffice to achieve
infidelity $\epsilon$ and confidence level $1-\delta$.
They are exponentially  more efficient than all known strategies based on local measurements,
including tomography, DFE \cite{PRL106.230501}, and self-testing \cite{Supic.etal2018,Fadel2017}; moreover, they are comparable to the best strategy based on
entangling measurements. Our protocols can easily be realized using current technologies
and are able to verify Dicke states of hundreds of qubits.
Moreover, we introduce a general method for constructing nonadaptive verification
protocols from adaptive protocols, which can be applied to the verification of various
other quantum states.

\section{Quantum state verification}
Consider a device that is supposed to produce the target state $\ket{\psi}$,
but may in practice produce $\sigma_1,\sigma_2,\dots,\sigma_N$ in $N$ runs.
In the ideal scenario, we have the promise that either $\sigma_i=\ket{\psi}\bra{\psi}$
for all $i$ or $\bra{\psi}\sigma_i\ket{\psi}\leq1-\epsilon$ for all $i$.
Then the task is to determine which is the case with the worst-case failure probability $\delta$.

In practice, we are interested in two-outcome measurements of the form
$\{\Omega_j,\openone-\Omega_j\}$, where $\Omega_j$ corresponds to passing the test.
A verification protocol takes on the general form
\begin{equation}\label{eq:general}
  \Omega=\sum_{j=1}^m\mu_j \Omega_j\,,
\end{equation}
where $\{\mu_1,\mu_2,\dots,\mu_m\}$ forms a probability distribution.
Here, we require that the target state $\ket{\psi}$ always passes the test, i.e.,
$\Omega_j\ket{\psi}=\ket{\psi}$ for all $\Omega_j$.
Then in the bad case $\bra{\psi}\sigma_i\ket{\psi}\leq1-\epsilon$, the maximal
probability that $\sigma_i$ can pass the test is \cite{PRL120.170502,arXiv:1909.01943}
\begin{equation}
  \max_{\langle\psi|\sigma|\psi\rangle\leq 1-\epsilon}\tr(\Omega\sigma)=1-[1-\lambda_2(\Omega)]\epsilon=1-\nu(\Omega)\epsilon\,,
\end{equation}
where $\lambda_2(\Omega)$ is the second largest eigenvalue of $\Omega$, and
$\nu(\Omega):=1-\lambda_2(\Omega)$ denotes the spectral gap from the maximal eigenvalue.

After $N$ runs, $\sigma$ in the bad case can pass the test with probability at
most $[1-\nu(\Omega)\epsilon]^{N}$.
To achieve confidence level $1-\delta$, i.e.,
$[1-\nu(\Omega)\epsilon]^N\leq\delta$, $N$ needs to satisfy
\cite{PRL120.170502}
\begin{equation}
  N\geq\frac{\ln\delta^{-1}}{\ln\bigl\{[1-\nu(\Omega)\epsilon]^{-1}\bigr\}}\approx\frac1{\nu(\Omega)}\epsilon^{-1}\ln\delta^{-1}\,.
\end{equation}
Therefore, the optimal protocol is obtained by maximizing the spectral gap $\nu(\Omega)$.
If there is no restriction on the accessible measurements, the optimal strategy is simply $\{\ket{\psi}\bra{\psi},\openone-\ket{\psi}\bra{\psi}\}$, so that $\Omega=\ket{\psi}\bra{\psi}$, $\nu(\Omega)=1$, and $N\approx\epsilon^{-1}\ln\delta^{-1}$.
However, it is difficult, if not simply impossible, to realize in experiments
when $\ket{\psi}$ is entangled.
Thus, it is more meaningful to devise efficient strategies based on local measurements only.

\section{Verification of $W$ states}
Besides the permutation symmetry, $\ket{W_n}$
has another important property: if we preform a Pauli-$Z$
measurement on any one of the $n$ subsystems, then the other
subsystems would collapse to either
$\ket{0}^{\otimes(n-1)}$ or $\ket{W_{n-1}}$ depending on whether the outcome is
1 (corresponding to eigenvalue $-1$) or 0 (eigenvalue 1). If we perform $Z$
measurements on all but two qubits, say $i$ and $j$, then outcome 1 can appear
at most once (otherwise, the original state cannot be $\ket{W_n}$). 
If outcome 1 appears, then the reduced state of parties $i$ and
$j$ is $\ket{00}$, which can be verified easily by $Z$ measurements on the two
parties; if outcome 1 does not appear, then the reduced state of parties $i$
and $j$ is $\ket{W_2}=\frac1{\sqrt{2}}(\ket{01}+\ket{10})$, which is nothing but a
Bell state.
This state can be verified optimally using the following protocol \cite{PRL120.170502, JPA39.14427, arXiv:0810.3381, Zhu2019}
\begin{equation}\label{eq:Bell}
  \Omega_{\mathrm{Bell}}=\frac1{3}\left[(XX)^{+}+(YY)^{+}+(ZZ)^{-}\right],
\end{equation}
where $X,Y,Z$ are the Pauli operators. Here the symbols $\pm$ in the superscripts
indicate the projectors onto the eigenspaces with eigenvalues $\pm1$.
See Appendix~A for more details on the  verification of a Bell state. In this
way, we can construct a test for $\ket{W_n}$ for each pair $i$ and $j$. By
randomizing the choices of $i$ and $j$ we can devise a verification protocol.

\begin{figure}[t]
  \includegraphics[width=.9\columnwidth]{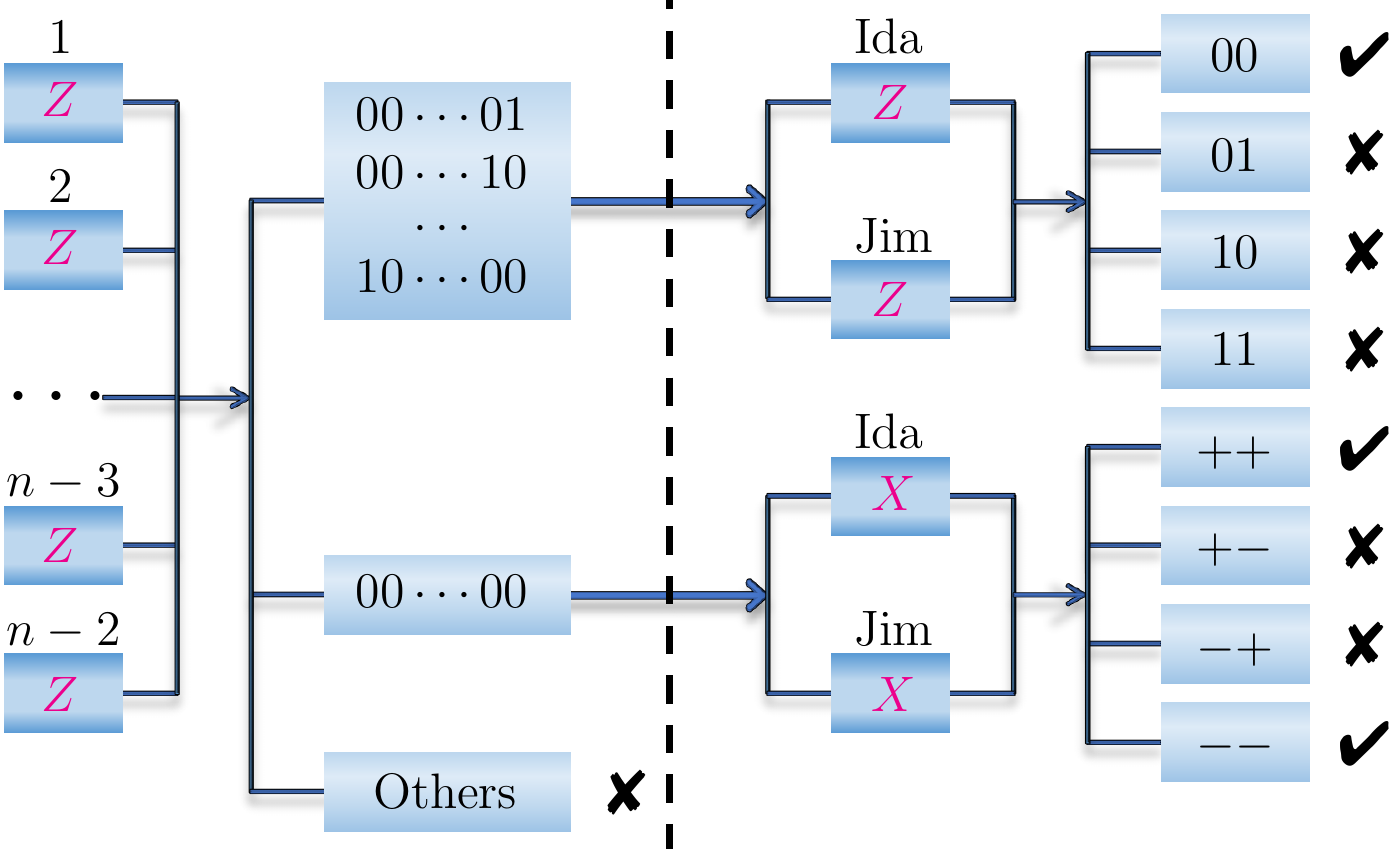}
  \caption{\label{fig:schemeW}
  Schematic view of the adaptive  protocol for verifying $\ket{W_n}$ as in Theorem~\ref{theo:W}.
  The dashed vertical line indicates that the protocol is two-step adaptive.
  For any two qubits $i$ and $j$ (Ida and Jim) chosen \textit{a priori},
  the measurement outcomes of the other $n-2$ qubits determine which measurements
  on them to perform.
  }
\end{figure}

It turns out that the tests based on $(YY)$ and $(ZZ)$ measurements in
Eq.~\eqref{eq:Bell} can be dropped out if randomization is considered.
The resulting protocol is illustrated in Fig.~\ref{fig:schemeW}, and its efficiency
is guaranteed by the following theorem, which is proved in Appendix~B.
\begin{theorem}\label{theo:W}
	$\ket{W_n}$ can be verified efficiently using the strategy
	\begin{eqnarray}\label{eq:OmegaWT}
	\Omega_W&=&\frac1{C_n^2}\sum_{i<j}\Omega_{i,j}^{\rightarrow}
	\end{eqnarray}
	with
	\begin{eqnarray}\label{eq:OmegaW}
	\Omega_{i,j}^{\rightarrow}=\bar{\cal Z}_{i,j}^1(Z_i^{+}Z_j^{+})
	+\bar{\cal Z}_{i,j}^0(XX)_{i,j}^{+}\,,
	\end{eqnarray}
	where the notation $\bar{\cal Z}_{i,j}^k$ means that  $k$ excitations
	are detected when we perform $Z$ measurements on all qubits except for $i$ and $j$.
	The spectral gap is $\nu(\Omega_{W})=\frac{1}{3}$ when $n=3$ and
	\begin{equation}
	\nu(\Omega_W)=\frac1{n-1}\quad \mbox{for}\,\, n\ge4\,.
	\end{equation}
\end{theorem}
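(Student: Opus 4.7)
The plan is to write $\Omega_W$ in closed form, identify its invariant subspaces, and diagonalize block by block. Throughout, let $|e_S\rangle$ denote the computational basis state excited exactly on the set $S$, and recall $|D_n^0\rangle = |0\rangle^{\otimes n}$, $|D_n^1\rangle = |W_n\rangle$, and $|D_n^2\rangle$.

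The first step is to observe that each $\Omega_{i,j}^{\rightarrow}$ is actually a projector: $\bar{\mathcal{Z}}_{i,j}^0$ and $\bar{\mathcal{Z}}_{i,j}^1$ project the $n-2$ passive qubits onto orthogonal subspaces, so the two summands have orthogonal range. Expanding $(XX)_{i,j}^{+} = (I + X_i X_j)/2$ and $Z_i^{+}Z_j^{+} = |00\rangle\langle 00|_{i,j}$, I would decompose $\Omega_{i,j}^{\rightarrow}$ into three mutually orthogonal projectors: onto $\{|e_l\rangle : l \notin \{i,j\}\}$, onto $(|e_i\rangle+|e_j\rangle)/\sqrt{2}$, and onto $(|D_n^0\rangle+|e_{ij}\rangle)/\sqrt{2}$. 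In particular $\Omega_W$ is supported on the subspace of at most two excitations, and $|W_n\rangle$ lies in the combined span of the first two pieces for every $i,j$, which gives $\Omega_W|W_n\rangle = |W_n\rangle$ immediately.

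Next I would sum over pairs $i<j$, using elementary multiplicity counts together with the identities $\sum_{i\neq j}|e_i\rangle\langle e_j| = n|W_n\rangle\langle W_n| - P_1$, $\sum_{i<j}|e_{ij}\rangle\langle e_{ij}| = P_2$, and $\sum_{i<j}|e_{ij}\rangle = \sqrt{\binom{n}{2}}\,|D_n^2\rangle$, where $P_1$ and $P_2$ are the identities on the single- and double-excitation subspaces. After dividing by $\binom{n}{2}$, this yields an orthogonal block decomposition of $\Omega_W$ with four nontrivial invariant subspaces: (i) $\mathrm{span}\{|W_n\rangle\}$ with eigenvalue $1$; (ii) $P_1^{\perp} := P_1 - |W_n\rangle\langle W_n|$ with eigenvalue $(n-2)/(n-1)$; (iii) $P_2^{\perp} := P_2 - |D_n^2\rangle\langle D_n^2|$ with eigenvalue $1/[n(n-1)]$; and (iv) a $2\times 2$ block on $\mathrm{span}\{|D_n^0\rangle,|D_n^2\rangle\}$ produced by the $XX$-induced coupling between vacuum and the symmetric two-excitation state.

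The $2\times 2$ block is the one piece still requiring explicit diagonalization, and the algebra cooperates: its determinant vanishes, so its eigenvalues are simply $0$ and the trace $\tfrac{1}{2} + \tfrac{1}{n(n-1)}$. The second-largest eigenvalue of $\Omega_W$ is therefore the maximum of $(n-2)/(n-1)$, $\tfrac{1}{2}+\tfrac{1}{n(n-1)}$, and $1/[n(n-1)]$; the inequality $(n-2)/(n-1) > \tfrac{1}{2}+\tfrac{1}{n(n-1)}$ reduces to $n^2 - 3n - 2 > 0$, which holds precisely for $n \geq 4$, while at $n=3$ the $2\times 2$ block wins with value $2/3$. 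Passing to $\nu(\Omega_W) = 1 - \lambda_2(\Omega_W)$ then reproduces the stated spectral gaps. The main obstacle will be the bookkeeping in the pair sum---correctly extracting the $|D_n^0\rangle \leftrightarrow |D_n^2\rangle$ coupling and verifying that no further off-diagonal piece links the four blocks; once the closed form is in hand, the rest is small-block linear algebra.
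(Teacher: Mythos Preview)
Your proposal is correct and follows essentially the same route as the paper. Both arguments split $(XX)_{i,j}^{+}$ into its $\ket{\psi^{+}}$ and $\ket{\varphi^{+}}$ pieces (your rank-one projectors onto $(|e_i\rangle+|e_j\rangle)/\sqrt{2}$ and $(|D_n^0\rangle+|e_{ij}\rangle)/\sqrt{2}$ are exactly these), sum over pairs, and read off the same invariant blocks with the same eigenvalues; the only cosmetic difference is that the paper names the eigenvectors $\ket{\phi_0},\ket{\phi_1}$ of the two-dimensional $\{|D_n^0\rangle,|D_n^2\rangle\}$ block explicitly, whereas you extract its spectrum via the vanishing determinant and the trace.
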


The test $\Omega_{i,j}^{\rightarrow}$ in Eq.~\eqref{eq:OmegaWT} can be realized using
adaptive measurements with two distinct measurement settings:
$n-2$ parties except for parties $i$ and $j$ perform $Z$  measurements,
then parties $i$ and $j$ perform either $Z$ or $X$ measurements
depending on whether an excitation is detected or not in the first stage.
The strategy $\Omega_W$ is composed of $C_n^2=\frac1{2}n(n-1)$ tests with probability
$2/[n(n-1)]$ each. Since all these tests can be turned into each other by permuting the qubits,
our protocol can be realized using only two measurement settings if permutations of qubits
can be realized.

Before proceeding further, we show that Theorem~\ref{theo:W} inspires
an efficient nonadaptive protocol, although the verification efficiency would
deteriorate by a factor of 2.  The basic idea is to replace the adaptive test
$\Omega_{i,j}^{\rightarrow}$ with two nonadaptive tests, performed with equal
probability. In one test, all parties perform $Z$ measurements, and the test is
passed if  excitation is detected once. In the other test, parties $i$ and $j$
perform $X$ measurements, and the other $n-2$ parties perform $Z$ measurements;
the test is passed if one excitation is detected for $Z$ measurements, or no
excitation is detected and the outcomes for parties $i$ and $j$ coincide. The
respective test projectors read
\begin{align}\label{eq:Omega01}
  \caZ^1&=\bar{\cal Z}_{i,j}^1(Z_i^{+}Z_j^{+})
  +\bar{\cal Z}_{i,j}^0(ZZ)^-_{i,j}\,,\\
  \Omega_{i,j}&=\bar{\cal Z}_{i,j}^0(XX)_{i,j}^{+}
  +\bar{\cal Z}_{i,j}^1(\openone\openone)_{i,j}\,.
\end{align}
Here $\caZ^1$ can also be expressed as $\caZ^1=\sum_{u\in
B_{n,1}}\ket{u}\bra{u}$ with $B_{n,1}$ being the
set of strings in $\{0,1\}^n$ with Hamming weight $1$. Note that $\caZ^1$ is
independent of $i,j$, unlike $\Omega_{ij}$.
The resulting verification operator reads
\begin{equation}\label{eq:OmegaWN}
\tilde\Omega_W=\frac1{2C_n^2}\sum_{i<j}\left(\caZ^1+\Omega_{i,j}\right)
=\frac{1}{2}\caZ^1+\frac1{2C_n^2}\sum_{i<j}\Omega_{i,j},
\end{equation}
and the spectral gap satisfies
\begin{equation}\label{eq:efficiency}
\nu(\tilde\Omega_W)\geq \nu\!\left(\frac{1}{2}\Omega_W+\frac{1}{2}\openone^{\otimes n}\right)
\ge\frac{1}{2}\nu(\Omega_W)\,.
\end{equation}
This bound is actually saturated when $n\geq 4$ (see the proof in
Appendix~C); in the case $n=3$, direct calculation shows that
$\nu(\tilde\Omega_{W_3})=\frac{3}{4}\nu(\Omega_{W_3})$.
Hence, the verification efficiency of $\tilde\Omega_W$ is
worse than that of the adaptive protocol $\Omega_W$ by a factor of at most 2.

When $n=3$ for example, we have
\begin{eqnarray}
  \Omega_{W_3}&=&\frac1{3}\Bigl[Z_3^{-}(Z_2^{+}Z_1^{+})+Z_3^{+}(XX)_{2,1}^{+}+Z_2^{-}(Z_3^{+}Z_1^{+})\nonumber\\
	    &+&Z_2^{+}(XX)_{3,1}^{+}+Z_1^{-}(Z_3^{+}Z_2^{+})+Z_1^{+}(XX)_{3,2}^{+}\Bigr]
\end{eqnarray}
for the adaptive protocol.
It is easy to verify that $\lambda_2(\Omega_{W_3})=\frac{2}{3}$ and $\nu(\Omega_{W_3})=\frac1{3}$.
So the number of tests required to verify $\ket{W_3}$ within infidelity $\epsilon$ and
confidence $1-\delta$ is $N\approx3\epsilon^{-1}\ln\delta^{-1}$. For the nonadaptive protocol
$\tilde{\Omega}_{W_3}$, we have $\nu(\tilde{\Omega}_{W_3})=\frac1{4}$, so the number of tests
required is $N\approx4\epsilon^{-1}\ln\delta^{-1}$. These results are corroborated by
numerical simulations in which we choose the worst noise in the eigenspace corresponding to
the second largest eigenvalue and get
$N\approx3.0031(\pm 0.0169)\epsilon^{-1}\ln\delta^{-1}$ for the adaptive protocol and
$N\approx3.9806(\pm 0.0109)\epsilon^{-1}\ln\delta^{-1}$ for the nonadaptive one.
Similarly, we get $N\approx7.0306(\pm 0.0188)\epsilon^{-1}\ln\delta^{-1}$ (adaptive) and
$N\approx14.0621(\pm 0.0262)\epsilon^{-1}\ln\delta^{-1}$ (nonadaptive) for $\ket{W_8}$.
More details on the simulated experiments can be found in Appendix~D.

\section{Verification of Dicke states}
Our protocols for verifying $W$ states can be naturally generalized to
arbitrary $n$-qubit Dicke states $\ket{D_n^k}$. Since $\ket{D_n^{n-1}}$ is equivalent to
$\ket{D_n^{1}}=\ket{W_n}$ under a local unitary transformation, we can assume $2\leq k\leq n-2$
and $n\geq 4 $ without loss of generality. For any pair of parties $i$ and $j$, we can
construct a test as follows (see Fig.~$4$ in Appendix~E for an illustration).
First we perform $Z$ measurements on $n-2$ parties other than parties $i$ and $j$.
If the outcomes have $k$ or $k-2$ excitations,
then we perform $(ZZ)$ measurements on qubits $i$ and $j$ and the test is passed
if the total number of excitations is $k$; if the outcomes have $k-1$ excitations,
then we perform $(XX)$ measurements and the test is passed if the two outcomes for
parties $i$ and $j$ coincide.
By randomizing the choice of the pair $i,j$
we can construct a verification protocol that is composed of $n(n-1)/2$ tests.
The efficiency of this protocol is guaranteed by the following theorem,
which is proved in Appendix~E.
\begin{theorem}\label{theo:D}
$\ket{D_n^k}$ can be verified efficiently using the strategy
\begin{eqnarray}\label{eq:OmegaDT}
  \Omega_D&=&\frac1{C_n^2}\sum_{i<j}\Omega_{i,j}^{\rightarrow}
\end{eqnarray}
with
\begin{align}\label{eq:OmegaD}
  \Omega_{i,j}^{\rightarrow}=\bar{\cal Z}_{i,j}^k(Z_i^{+}Z_j^{+})
  +\bar{\cal Z}_{i,j}^{k-2}(Z_i^{-}Z_j^{-})
  +\bar{\cal Z}_{i,j}^{k-1}(XX)_{i,j}^{+}\,.
\end{align}
The spectral gap is
\begin{equation}
  \nu(\Omega_D)=\frac1{n-1}\quad \mbox{for}\,\, n\ge4\,.
\end{equation}
\end{theorem}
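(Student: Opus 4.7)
The plan is to compute the second-largest eigenvalue of $\Omega_D$ by exploiting the permutation symmetry of the protocol, paralleling the proof of Theorem~\ref{theo:W}. Since $\Omega_D=\frac{1}{C_n^2}\sum_{i<j}\Omega_{i,j}^{\rightarrow}$ is invariant under the natural $S_n$ action on the $n$ qubits, Schur--Weyl duality decomposes $(\mathbb{C}^2)^{\otimes n}$ into $S_n$-isotypic components labeled by two-row Young diagrams $\lambda=(n-r,r)$, $0\le r\le n/2$; on each block $\Omega_D$ restricts to $\openone_{V_\lambda}\otimes A_\lambda$ on a multiplicity space of dimension $n-2r+1$. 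The spectrum of $\Omega_D$ is thus the union of the spectra of the $A_\lambda$, so it suffices to verify that the eigenvalue $1$ is attained only on $\ket{D_n^k}$ (inside the $r=0$ block) and that $\lambda_{\max}(A_\lambda)\le\frac{n-2}{n-1}$ for all remaining blocks.

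For the fully symmetric block ($r=0$) I would use the bipartite expansion $\ket{D_n^j}=\sum_{b=0}^{2}\sqrt{p_b^j}\,\ket{b}^{ij}\otimes\ket{D_{n-2}^{j-b}}$ with $p_b^j=\binom{2}{b}\binom{n-2}{j-b}/\binom{n}{j}$ and the explicit form of $\Omega_{1,2}^{\rightarrow}$ to evaluate the matrix elements $\bra{D_n^j}\Omega_D\ket{D_n^l}$ directly. A short calculation shows that $\ket{D_n^k}$ is fixed, every $\ket{D_n^j}$ with $|j-k|\ge 2$ lies in the kernel, and $\mathrm{span}\{\ket{D_n^{k-1}},\ket{D_n^{k+1}}\}$ carries a rank-one block whose only nonzero eigenvalue is $\tfrac12\bigl(p_0^{k-1}+p_2^{k+1}\bigr)$; a direct arithmetic check places this strictly below $\frac{n-2}{n-1}$. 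For each non-symmetric block $r\ge 1$ I would construct a basis of the multiplicity space from partially antisymmetrized Dicke-type vectors (singlet pairs $\ket{01}-\ket{10}$ on $r$ pairs of qubits, tensored with a smaller Dicke state and symmetrized over the remaining qubits) and compute $A_\lambda$ on this basis via the same bipartite trick. I expect $A_{(n-1,1)}$ to saturate the bound $\frac{n-2}{n-1}$ and higher-$r$ blocks to fall strictly below.

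The main obstacle is precisely this non-symmetric analysis: choosing a clean basis of $W_\lambda$ uniformly in $\lambda$ and keeping the combinatorics of the outside-weight measurements tractable so that the tight bound $\frac{n-2}{n-1}$ emerges across every $r\ge 1$. An alternative I would pursue in parallel is induction on $n$: the recursive structure of $\Omega_{i,j}^{\rightarrow}$, whose outside measurement on $n-2$ qubits effectively invokes a verification problem for a smaller Dicke state, suggests that bounds for $\ket{D_{n-2}^{k-1}}$ and related neighbours can be bootstrapped into a bound for $\ket{D_n^k}$, in analogy with the presumed inductive proof of Theorem~\ref{theo:W}. A third route is to establish the equivalent operator inequality $(n-1)\Omega_D\preceq(n-2)\openone+\ket{D_n^k}\bra{D_n^k}$ directly by exhibiting an explicit positive-semidefinite witness for the difference, sidestepping the irreducible decomposition altogether---at the cost of having to construct the witness.
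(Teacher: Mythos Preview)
Your symmetric-block ($r=0$) analysis is correct and your rank-one eigenvalue $\tfrac12(p_0^{k-1}+p_2^{k+1})$ indeed equals the paper's $\lambda_1(M_2)/[n(n-1)]$. But the paper does \emph{not} proceed via Schur--Weyl, and in particular never confronts the obstacle you flag for the $r\ge 1$ blocks. Instead it decomposes $\Omega_D$ by \emph{Hamming weight}: since the first two terms of $\Omega_{i,j}^{\rightarrow}$ preserve total weight $k$ and the $(XX)^+$ term splits as $\ket{\psi^+}\!\bra{\psi^+}+\ket{\varphi^+}\!\bra{\varphi^+}$, one obtains $\Omega_D=\tfrac{1}{n(n-1)}(M_1+M_2)$ with $M_1$ supported on weight-$k$ strings and $M_2$ on weight-$(k\pm1)$ strings. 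The operator $M_1$ is, up to a scalar shift, the adjacency matrix of the Johnson graph $J(n,k)$, whose spectrum is classical (e.g.\ Brouwer--Cohen--Neumaier): its top two eigenvalues are $n(n-1)$ and $n(n-2)$, giving exactly $\lambda_2=1-\tfrac{1}{n-1}$. For $M_2$ one only needs its \emph{largest} eigenvalue, and because $M_2$ is nonnegative and irreducible on its support, Perron--Frobenius identifies the Perron vector as the symmetric combination $\sqrt{C_n^{k+1}}\ket{D_n^{k-1}}+\sqrt{C_n^{k-1}}\ket{D_n^{k+1}}$, recovering precisely your rank-one eigenvalue. Comparing the two maxima finishes the proof.

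So your $S_n$-isotypic route is not wrong, but it trades a one-line citation of Johnson-graph eigenvalues for the block-by-block analysis you rightly identify as the hard part; the Hamming-weight split plus Perron--Frobenius sidesteps that entirely. Two minor corrections: the proof of Theorem~\ref{theo:W} in the paper is \emph{not} inductive---it is the $k=1$ specialization of the same direct computation---so your proposed induction on $n$ would not be ``in analogy'' with it; and your operator-inequality route would need not only $(n-1)\Omega_D\preceq(n-2)\openone+\ket{D_n^k}\!\bra{D_n^k}$ but also the reverse direction (an explicit eigenvector at $\tfrac{n-2}{n-1}$), which the paper supplies as $\ket{\psi^-}_{i,j}\otimes\ket{D_{n-2}^{k-1}}$.
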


Several remarks are in order. First, when $k=1$, the second term in Eq.~\eqref{eq:OmegaD}
drops out and we get back Eq.~\eqref{eq:OmegaW} as expected.
Second, although we need to consider  three different cases in constructing
$\Omega_{i,j}^{\rightarrow}$, only two distinct measurement settings are required,
which is the same as that for $W$ states.
Last, the spectral gap $\nu(\Omega_D)$ is independent of $k$ and is
the same as that for $W$ states. Therefore, all $n$-qubit Dicke states can be verified
using the same experimental setup and with the same efficiency.
To achieve infidelity $\epsilon$ and confidence $1-\delta$, the total number of tests
required is only $N\approx(n-1)\epsilon^{-1}\ln\delta^{-1}$, so our protocol is able to verify
Dicke states of hundreds of qubits.

Similar to the case of $W$ states, Theorem~\ref{theo:D} also inspires an
efficient nonadaptive protocol.
The basic idea is to replace the adaptive test $\Omega_{i,j}^{\rightarrow}$ with
two nonadaptive tests as characterized by the two test projectors
\begin{align}\label{eq:OmegaD01}
&\caZ^k=\bar{\cal Z}_{i,j}^k(Z_i^{+}Z_j^{+})
+\bar{\cal Z}_{i,j}^{k-2} (Z_i^{-}Z_j^{-}) +\bar{\cal
Z}_{i,j}^{k-1}(ZZ)^-_{i,j}\,,\\
 &\Omega_{i,j}=\bar{\cal Z}_{i,j}^{k-1}(XX)_{i,j}^{+}
+\bar{\cal Z}_{i,j}^k(\openone\openone)_{i,j}+\bar{\cal
Z}_{i,j}^{k-2}(\openone\openone)_{i,j}\,.
\end{align}
Here $\caZ^k$ can also be expressed as
$\caZ^k=\sum_{u\in B_{n,k}}\ket{u}\bra{u}$ with $B_{n,k}$ being the set of strings in
$\{0,1\}^n$ with Hamming weight $k$.
The resulting verification operator reads
\begin{equation}\label{eq:OmegaDN}
\tilde\Omega_D=\frac1{2C_n^2}\sum_{i<j}\left(\caZ^k+\Omega_{i,j}\right)
=\frac{1}{2}\caZ^k+\frac1{2C_n^2}\sum_{i<j}\Omega_{i,j},
\end{equation}
and the spectral gap satisfies
\begin{equation}\label{eq:efficiencyD}
\nu(\tilde\Omega_D)\geq \nu\!\left(\frac{1}{2}\Omega_D+\frac{1}{2}\openone^{\otimes n}\right)
\ge\frac{1}{2}\nu(\Omega_D)\,.
\end{equation}
This bound is actually saturated given the assumption $n\geq 4$ (see the proof in Appendix~F).
Hence, the verification efficiency of $\tilde\Omega_D$ is
worse than that of the adaptive protocol $\Omega_D$  by a factor of 2.

Take $\ket{D_4^2}$ as an example.
The second largest eigenvalue and spectral gap of $\Omega_{D_4^2}$ (see Appendix~G for
an explicit expression) read
$\lambda_2(\Omega_{D_4^2})=\frac{2}{3}$ and
$\nu(\Omega_{D_4^2})=\frac1{3}$.
So the number of tests required to verify $\ket{D_4^2}$ within infidelity $\epsilon$
and confidence $1-\delta$ is $N\approx3\epsilon^{-1}\ln\delta^{-1}$.
For the nonadaptive protocol $\tilde{\Omega}_{D_4^2}$, we have
$\nu(\tilde{\Omega}_{D_4^2})=\frac1{6}$, so the number of tests required is
$N\approx6\epsilon^{-1}\ln\delta^{-1}$. 
See the numerical confirmations in Appendix~D.

\begin{figure}[t]
  \includegraphics[width=.9\columnwidth]{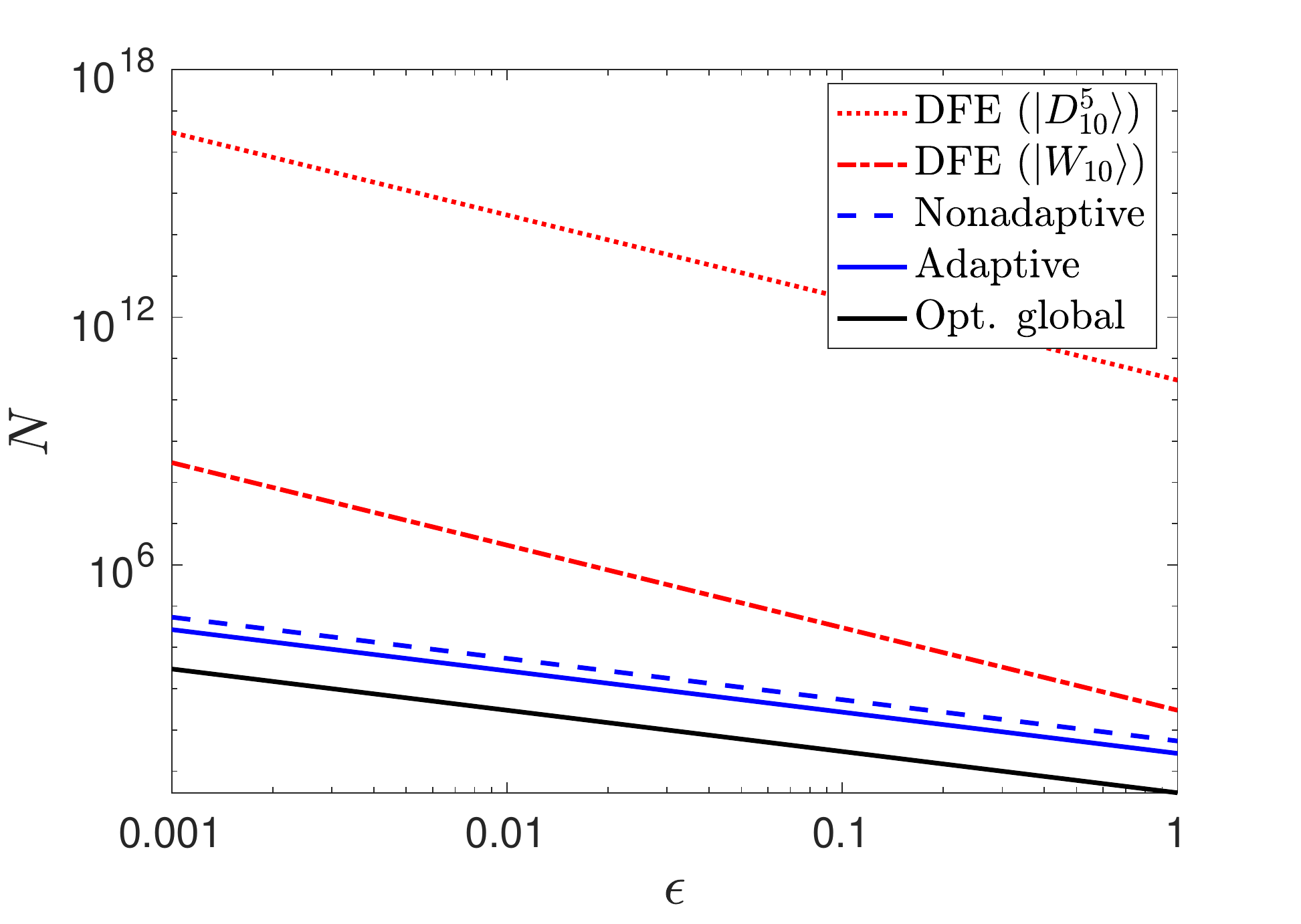}
  \caption{\label{fig:compare}
  Comparison between our efficient adaptive and nonadaptive protocols with the DFE protocol \cite{PRL106.230501} and the optimal global protocol.
  For a given number of qubits $n=10$ and  confidence $1-\delta=0.95$,
  the plot shows the number of tests $N$ required to verify $\ket{W_{10}}$ and $\ket{D_{10}^5}$ within infidelity $\epsilon$
  for each protocol. 
  }
\end{figure}
%

\section{Comparison with other methods}
Here, we compare our adaptive and nonadaptive protocols with
two other non-tomographic methods.
The first one is the protocol of  direct fidelity estimation (DFE)
proposed in Ref.~\cite{PRL106.230501}.
For an $n$-qubit Dicke state with $k$ excitations, this protocol
requires $N\propto O(n^{2k}\epsilon^{-2}\ln\delta^{-1})$ tests,
and the number of measurement settings has the same order of magnitude.
This number increases exponentially with $n$ if
${k\propto n}$, which is the case for the balanced Dicke state with $k=n/2$.
The second one is the optimal global verification protocol with the
entangled verification operator $\Omega=\ket{D_n^k}\bra{D_n^k}$, which requires
$N\approx\epsilon^{-1}\ln\delta^{-1}$ tests.

In Fig.~\ref{fig:compare}, by fixing the number of qubits $n=10$ and the
confidence $1-\delta=0.95$, we plot the number of tests  $N$ required to verify
$\ket{W_{10}}$ and $\ket{D_{10}^5}$ within infidelity $\epsilon$.
As can be seen, our adaptive and nonadaptive protocols are much more efficient than DFE
and are comparable to the best protocol based  on entangling measurements.
In addition, similar to the optimal global protocol, the performances of our protocols are
independent of the number of excitations $k$, while the performance of DFE deteriorates quickly
as $k$ increases and is already impractical for $k=5$ and $\epsilon=0.1$.

\section{Construction of nonadaptive protocols from adaptive protocols}
Inspired by the above results, here we present a general method for converting adaptive
verification protocols to nonadaptive ones at the price of  efficiency.
To this end, we need a notion for characterizing the complexity of an adaptive protocol.
As shown in Fig.~\ref{fig:schemeW}, an adaptive test
is usually composed of a number of branches. The branch number of the test $\Omega_j$,
denoted by $\alpha(\Omega_j)$, is defined as the total number of such branches in realizing
$\Omega_j$, and the branch number of a protocol  is the maximum branch number over all tests.
For example, the branch numbers of the adaptive protocols
$\Omega_W$ and $\Omega_D$ are $2$ and $3$, respectively.
To construct a nonadaptive protocol, we can replace each adaptive test with
a number of nonadaptive tests depending on the branch number,
which sets a lower bound for the efficiency.
More precisely, we have the following theorem.
\begin{theorem}\label{the:adap-nonadap}
  In quantum state verification, an adaptive protocol
  $\Omega=\sum_{j=1}^m\mu_j\Omega_j$ can always be converted to a nonadaptive one
  $\tilde\Omega$ whose spectral gap satisfies
  \begin{equation}\label{eq:adap-nonadap}
    \nu(\tilde{\Omega})\ge\frac{1}{\alpha}\nu(\Omega)\,,
  \end{equation}
  where $\alpha=\max_j\{\alpha(\Omega_j)\}$ is the branch number of the adaptive protocol $\Omega$.
\end{theorem}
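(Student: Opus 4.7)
My plan is to generalize the passage from the adaptive protocol $\Omega_W$ to the nonadaptive $\tilde\Omega_W$ in Eqs.~\eqref{eq:OmegaWT}--\eqref{eq:OmegaWN} to an arbitrary adaptive test $\Omega_j$, and then reduce the spectral-gap bound to a single operator inequality of the form $\tilde\Omega \leq \frac{1}{\alpha}\Omega + (1-\frac{1}{\alpha})\openone$. The core idea is that an adaptive test is already a sum over its branches (one per leaf of the decision tree), and each branch can be executed nonadaptively by committing in advance to the measurements along it, at the cost of one additional test per branch.

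Concretely, I would write each adaptive test in branch form $\Omega_j = \sum_{b=1}^{\alpha(\Omega_j)} P_j^{(b)} M_j^{(b)}$, where $\{P_j^{(b)}\}_b$ are the mutually orthogonal projectors labelling the (possibly history-compressed) outcomes of all measurements preceding the final stage along branch $b$, and $M_j^{(b)}$ is the final-stage test on that branch. For each $b$ I would define the nonadaptive test
\begin{equation}
\Omega_j^{(b)} := P_j^{(b)} M_j^{(b)} + \sum_{b' \neq b} P_j^{(b')} = \openone - P_j^{(b)}(\openone - M_j^{(b)}),
\end{equation}
which performs the same pre-final measurements as branch $b$ but commits in advance to $M_j^{(b)}$, passing trivially on any unselected outcome; it preserves $\ket{\psi}$ because $P_j^{(b)}\ket{\psi}\neq 0$ forces $M_j^{(b)}\ket{\psi}=\ket{\psi}$. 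Averaging uniformly over $b$ and using $\sum_b \sum_{b'\neq b} P_j^{(b')} = (\alpha(\Omega_j)-1)\openone$ yields
\begin{equation}
\tilde\Omega_j := \frac{1}{\alpha(\Omega_j)} \sum_b \Omega_j^{(b)} = \frac{1}{\alpha(\Omega_j)}\Omega_j + \left(1 - \frac{1}{\alpha(\Omega_j)}\right)\openone,
\end{equation}
so the candidate $\tilde\Omega := \sum_j \mu_j \tilde\Omega_j$ is manifestly a convex combination of nonadaptive tests.

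Since $\alpha(\Omega_j)\leq\alpha$ and $\openone - \Omega_j\succeq 0$, each $\tilde\Omega_j$ satisfies $\tilde\Omega_j \leq \frac{1}{\alpha}\Omega_j + (1-\frac{1}{\alpha})\openone$; summing with weights $\mu_j$ then gives $\tilde\Omega \leq \frac{1}{\alpha}\Omega + (1-\frac{1}{\alpha})\openone$. Both sides share $\ket{\psi}$ as a unit-eigenvalue eigenvector, so the min-max principle yields $\lambda_2(\tilde\Omega) \leq \frac{1}{\alpha}\lambda_2(\Omega) + (1-\frac{1}{\alpha})$; subtracting from $1$ gives $\nu(\tilde\Omega) \geq \alpha^{-1}\nu(\Omega)$, as required. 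I expect the main obstacle to be the opening normalization step: showing that the branches of an arbitrary adaptive test can be indexed by mutually orthogonal ``pre-final'' projectors summing to identity, so that the identity for $\tilde\Omega_j$ has no cross terms. This is immediate for the two-stage tests used in Theorems~\ref{theo:W} and \ref{theo:D}; for deeper trees one would first collapse the adaptive history preceding the leaves into a single effective POVM, after which the remainder is routine positive-semidefinite operator algebra and no spectral calculation is needed.
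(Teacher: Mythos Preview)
Your proposal is correct and follows essentially the same route as the paper's proof: both replace each adaptive test by the uniform average of branch-committed nonadaptive tests $\tilde\Omega_{a|j}=M_{a|j}\otimes N_{a|j}+\sum_{b\ne a}M_{b|j}\otimes\openone$, then bound the resulting verification operator by $\frac{1}{\alpha}\Omega+(1-\frac{1}{\alpha})\openone$ to extract the spectral-gap inequality. The only notable difference is that the paper allows the first-stage elements to form an \emph{incomplete} POVM ($\sum_a M_{a|j}\le\openone$), which absorbs the ``fail'' outcomes without inflating the branch count---this directly resolves the normalization obstacle you flag at the end, and your derivation goes through verbatim once $\sum_b P_j^{(b)}=\openone$ is relaxed to $\sum_b P_j^{(b)}\le\openone$ (the bound $\tilde\Omega_j\le\frac{1}{\alpha}\Omega_j+(1-\frac{1}{\alpha})\openone$ only improves, and $\ket\psi$ remains a unit eigenvector since the target state never fails the first stage).
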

\begin{proof}
For simplicity, here we consider a two-step adaptive test of the form (but our
idea is applicable in general)
\begin{equation}
  \Omega_j=\sum_{a=1}^{\alpha_j}M_{a|j}\otimes N_{a|j}\,,
  \label{eq:adaptive}
\end{equation}
where $\alpha_j=\alpha(\Omega_j)$, and $\{M_{a|j}\}_a$ represents a (possibly
incomplete) generalized measurement on subsystem $A$, i.e., $M_{a|j}\ge 0$
and $\sum_{a=1}^{\alpha}M_{a|j}\le\openone_A$, while $N_{a|j}$ represents
a test on subsystem $B$ that depends on the outcome $a$ and  satisfies $0\leq
N_{a|j}\leq \openone_B$. Here both $A$ and $B$ may consist of one or more
subsystems. Based on the adaptive test
$\Omega_j$
we can construct $\alpha_j$ nonadaptive tests
\begin{equation}\label{eq:Ptilde}
  \tilde{\Omega}_{a|j}=M_{a|j}\otimes N_{a|j}+\sum_{b\ne a}M_{b|j}\otimes
  \openone_B\,,
\end{equation}
and the corresponding nonadaptive strategy is
\begin{equation}\label{eq:nonadaptive}
  \tilde{\Omega}=\sum_{j=1}^m\sum_{a=1}^{\alpha_j}\frac{\mu_j}{\alpha_j}\tilde{\Omega}_{a|j}\,,
\end{equation}
which satisfies  $\tilde{\Omega}\ket{\psi}=\ket{\psi}$ whenever
$\Omega\ket{\psi}=\ket{\psi}$.
Now, we have
\begin{equation}\label{eq:effnon}
  \nu(\tilde{\Omega})=\nu\!\left(\frac{1}{\alpha}\Omega+\Omega'\right)
  \ge\frac{1}{\alpha}\nu(\Omega)\,,
\end{equation}
where the inequality follows from
$\Omega'=\sum_{j=1}^m\mu_j\bigl(1-\frac{1}{\alpha_j}\bigr)
\bigl(\sum_{a=1}^{\alpha_j}M_{a|j}\bigr)\otimes\openone_B+
\sum_{j=1}^m\mu_j\bigl(\frac{1}{\alpha_j}
-\frac{1}{\alpha}\bigr)\sum_{a=1}^{\alpha_j}\bigl(M_{a|j}\otimes N_{a|j}\bigr)
\le \bigl(1-\frac{1}{\alpha}\bigr)\openone$.
\end{proof}

According to Theorem~\ref{the:adap-nonadap}, an efficient adaptive protocol can
be converted to an efficient nonadaptive one if the branch number
$\alpha$ is small. This is the case for our adaptive protocols
for verifying $W$ states and Dicke states, in which $\alpha$ equals $2$ and $3$,
respectively. In addition, the adaptive protocols proposed in
Refs.~\cite{arXiv:1901.09856,Li.etal2019,Wang.Hayashi2019} for general
bipartite pure states can be  converted to nonadaptive ones by our method.
Note that in the above construction, we are interested in a general recipe; for
a specific adaptive strategy, sometimes one can construct better nonadaptive
strategies. For instance, if several branches happen to require the same measurement
setting, we can merge these branches into one, which is
the case for the verification of Dicke states.

\section{Conclusions}
Efficient and reliable characterization of quantum states plays a vital role
in almost all quantum information processing tasks as well as foundational studies.
Using both adaptive and nonadaptive approaches,
here we proposed efficient and practical protocols for verifying arbitrary $n$-qubit
Dicke states, including $W$ states. Both adaptive and nonadaptive protocols require
only two distinct settings based on Pauli measurements together with
permutations of the qubits, which is well within the reach of current experimental techniques.
To verify an $n$-qubit Dicke state within infidelity $\epsilon$ and confidence
$1-\delta$, both protocols require only  $O(n\epsilon^{-1}\ln\delta^{-1})$ tests,
which is exponentially  more efficient than all previous protocols based on local measurements.
Thus, our protocols are able to verify Dicke states of hundreds of qubits.
Moreover, we introduced a general method for constructing nonadaptive verification protocols
from adaptive protocols.
Our work opens the possibility of efficiently verifying
many other interesting multipartite states in the future,
even the possibility of developing a general verification
strategy for all states eventually.

\acknowledgments
We are grateful to Otfried G\"uhne, Yun-Guang Han, and Zihao Li for discussions.
This work has been supported by the National Key R\&D Program of China under Grant No.
2017YFA0303800 and the National Natural Science Foundation of China through Grant Nos.
11574031, 61421001, 11805010, and 11875110.
J.S. also acknowledges support by the Beijing Institute of Technology Research Fund Program
for Young Scholars.
X.D.Y. acknowledges support by the DFG and the ERC (Consolidator Grant 683107/TempoQ).

\appendix

\section{Verification of Bell states}\label{app:Bell}
Bell states can be verified optimally using the protocol in Ref.~\cite{PRL120.170502}
(see also  Refs.~\cite{JPA39.14427, arXiv:0810.3381,Zhu2019}).
For the particular Bell state that we consider in this work, i.e.,
$\ket{W_2}=\frac1{\sqrt{2}}(\ket{01}+\ket{10})$, the optimal strategy reads
\begin{equation}
  \Omega_{\mathrm{Bell}}=\frac1{3}\left[(XX)^{+}+(YY)^{+}+(ZZ)^{-}\right],
\end{equation}
which reproduces Eq.~\eqref{eq:Bell} in the main text, and the spectral gap is
$\nu(\Omega_{\mathrm{Bell}})=\frac{2}{3}$.  To verify the Bell state within infidelity
$\epsilon$ and confidence level $1-\delta$, the number of required tests is
$N\approx\frac3{2}\epsilon^{-1}\ln\delta^{-1}$.

As an alternative, one can  modify the optimal protocol by
removing the test based on measurement $(YY)$
\cite{PRL120.170502,Zhu2019}.
Then  the verification  operator reads
\begin{equation}
  \Omega_{W_2}=\frac1{2}\left[(XX)^{+}+(ZZ)^{-}\right],
\end{equation}
and the spectral gap reduces to $\nu(\Omega_{\mathrm{Bell}})=\frac{1}{2}$.
Accordingly, the number of tests increases to $N\approx2\epsilon^{-1}\ln\delta^{-1}$.
This protocol requires only two measurement settings instead of three although the efficiency
is slightly worse. This observation was instrumental in constructing efficient protocols
for verifying $W$ and Dicke states at the beginning of our study.

\section{Proof of Theorem~1}
Theorem~1 is an immediate consequence of the following lemma, which provides
more details
on the verification operator $\Omega_W$.
\begin{lemma}\label{lem:Wverify}
	For $n\ge 3$, the verification operator $\Omega_W$ has five different eigenvalues
	$1,1-\frac1{n-1},\frac1{2}+\frac1{n(n-1)},\frac1{n(n-1)},0$
	with multiplicities  $1, n-1, 1, \frac1{2}n(n-1)-1$, and
	$2^n-\frac{1}{2}(n^2+n)$, respectively.
	When $n= 3$, the second largest eigenvalue of $\Omega_W$ is
	$\lambda_2(\Omega_W)=\frac2{3}$, which is nondegenerate, and the spectral gap is
    $\nu(\Omega_W)=\frac{1}{3}$. When $n\geq 4$, the second largest eigenvalue  is
	$\lambda_2(\Omega_W)=1-\frac1{n-1}$ with multiplicity $n-1$, the spectral gap is
    $\nu(\Omega_W)=\frac{1}{n-1}$, and the corresponding eigenspace is spanned by
	\begin{equation}\label{eq:2ndEig}
	\ket{\phi_{ij}}=\ket{\psi^{-}}_{i,j}\otimes\ket{0}^{\otimes(n-2)}\,,\quad 1\leq i<j\leq n\,.
	\end{equation}
	where $\ket{\psi^-}=\frac{1}{\sqrt{2}}(\ket{01}-\ket{10})$ is the singlet.
\end{lemma}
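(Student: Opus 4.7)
The plan is to diagonalize $\Omega_W$ by exploiting three structural features: (a) its annihilation of high-weight strings, (b) its weight-parity preservation, and (c) its $S_n$ invariance. First I would inspect how each $\Omega_{i,j}^{\rightarrow}$ acts on the computational basis. The diagonal projector $\bar{\cal Z}^s_{i,j}$ forces the $n-2$ qubits outside $\{i,j\}$ to have Hamming weight $s\in\{0,1\}$; on the pair $(i,j)$ we then meet either $Z_i^+Z_j^+$ (keeping only $\ket{00}$) or $(XX)^+_{i,j}$ (which can shift the pair weight by $0$ or $\pm 2$ and hence preserves weight parity). A short case analysis shows that $\Omega_{i,j}^{\rightarrow}$ annihilates any basis vector of total Hamming weight $\ge 3$, so $\Omega_W$ is supported on $V_0\oplus V_1\oplus V_2$ (with $V_k$ the weight-$k$ subspace) and is block-diagonal as $(V_0\oplus V_2)\oplus V_1$. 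This already produces a zero-eigenspace contribution of dimension $2^n-1-n-\binom{n}{2}$ from the weights $\ge 3$.

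Next I would invoke the $S_n$ invariance of $\Omega_W$, which comes from the symmetrization $\frac{1}{C_n^2}\sum_{i<j}$. Under the permutation action, $V_1$ decomposes as the trivial subrep $\mathrm{span}(\ket{W_n})$ plus the $(n-1)$-dimensional standard subrep $\mathrm{span}\{\ket{\phi_{ij}}\}$, while $V_2$ contains a unique trivial subrep $\mathrm{span}(\ket{D_n^2})$. Since the target state passes every test, $\Omega_W\ket{W_n}=\ket{W_n}$. Schur's lemma then forces a single eigenvalue on the standard subrep inside $V_1$, forces $\Omega_W$ to act as a scalar on the orthogonal complement of $\ket{D_n^2}$ inside $V_2$, and allows mixing only within the trivial isotypic block $\mathrm{span}(\ket{\vec 0},\ket{D_n^2})$ of $V_0\oplus V_2$.

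The quantitative input comes from three explicit computations. Writing $\ket{e_k}$ for the weight-$1$ string with its excitation at $k$, and $\ket{e_i e_j}$ analogously for weight $2$, a routine pair count (splitting pairs into those containing $k$ and those not) gives $\Omega_W\ket{e_k}=\tfrac{n-1}{n}\ket{e_k}+\tfrac{1}{n(n-1)}\sum_{l\ne k}\ket{e_l}$, from which the standard-subrep eigenvalue reads $1-\tfrac{1}{n-1}$. Similarly $\Omega_W\ket{e_i e_j}=\tfrac{1}{n(n-1)}(\ket{\vec 0}+\ket{e_i e_j})$ (only the pair $(i,j)$ contributes), confirming the scalar $\tfrac{1}{n(n-1)}$ on $\ket{D_n^2}^\perp\subset V_2$. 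Combining these with $\Omega_W\ket{\vec 0}=\tfrac12\ket{\vec 0}+\tfrac{1}{\sqrt{2n(n-1)}}\ket{D_n^2}$ yields a $2\times 2$ block on $\mathrm{span}(\ket{\vec 0},\ket{D_n^2})$ with diagonal entries $\tfrac12$ and $\tfrac{1}{n(n-1)}$ and off-diagonal entry $\tfrac{1}{\sqrt{2n(n-1)}}$; its trace is $\tfrac12+\tfrac{1}{n(n-1)}$ and its determinant vanishes, so the eigenvalues are $0$ and $\tfrac12+\tfrac{1}{n(n-1)}$. A final sanity check verifies that the five multiplicities sum to $2^n$.

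To pin down $\lambda_2(\Omega_W)$ I would compare the two nontrivial candidates $1-\tfrac{1}{n-1}$ and $\tfrac12+\tfrac{1}{n(n-1)}$: a one-line manipulation shows that the former dominates iff $n(n-3)>2$, which fails for $n=3$ (giving $\lambda_2=\tfrac23$, nondegenerate) and holds for every $n\ge 4$ (giving $\lambda_2=1-\tfrac{1}{n-1}$ with eigenspace $\mathrm{span}\{\ket{\phi_{ij}}\}$). I do not foresee any conceptual difficulty; the main obstacle is bookkeeping — keeping the normalization factors consistent across the three basis-vector evaluations, the $2\times 2$ block, and the multiplicity totals.
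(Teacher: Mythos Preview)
Your proposal is correct and complete. It differs from the paper's argument mainly in organizing principle: the paper splits $(XX)^+_{i,j}=|\psi^+\rangle\langle\psi^+|_{i,j}+|\varphi^+\rangle\langle\varphi^+|_{i,j}$ and evaluates each of the three resulting symmetrized operator sums in closed form (for instance $\frac{1}{C_n^2}\sum_{i<j}\bar{\mathcal Z}^0_{i,j}\otimes|\varphi^+\rangle\langle\varphi^+|_{i,j}=\frac{1}{2C_n^2}\bigl[C_n^2|\phi_0\rangle\langle\phi_0|+\mathcal Z^0+\mathcal Z^2-|\phi_1\rangle\langle\phi_1|\bigr]$ for explicit normalized combinations $|\phi_0\rangle,|\phi_1\rangle$ of $|D_n^0\rangle$ and $|D_n^2\rangle$), then reads the spectrum directly off the resulting projector and rank-one pieces. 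You instead establish the block structure first via weight-parity and $S_n$ invariance, and then evaluate $\Omega_W$ on the three representative vectors $|\vec 0\rangle$, $|e_k\rangle$, $|e_ie_j\rangle$. Both routes exploit the same combinatorics and arrive at the same $2\times 2$ block on $\mathrm{span}(|\vec 0\rangle,|D_n^2\rangle)$; yours makes the representation-theoretic origin of the multiplicities explicit, while the paper's operator-algebraic route delivers the eigenvectors $|\phi_0\rangle,|\phi_1\rangle$ of that block with no extra work.

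One caution on your appeal to Schur: for $n\ge 4$ the orthogonal complement of $|D_n^2\rangle$ inside $V_2$ is \emph{not} irreducible (as an $S_n$-module it is standard $\oplus\,S^{(n-2,2)}$), so Schur by itself only guarantees two possibly distinct scalars there. Your argument is nonetheless sound because the explicit formula $\Omega_W|e_ie_j\rangle=\frac{1}{n(n-1)}(|\vec 0\rangle+|e_ie_j\rangle)$, valid for \emph{every} pair, already shows that the $V_2\to V_2$ block equals $\frac{1}{n(n-1)}I_{V_2}$; the Schur step on that block is therefore superfluous rather than load-bearing.
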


\begin{proof}
For $n\geq 3$,  recall that $\Omega_W$ is defined as
  \begin{align}\label{eq:decomOmegaW}
    \Omega_W
      =&\frac{1}{C_n^2}\sum_{i<j}\bar{\mathcal{Z}}_{i,j}^1(Z_i^{+}Z_j^{+})
      +\frac{1}{C_n^2}\sum_{i<j}\bar{\mathcal{Z}}_{i,j}^0(XX)_{i,j}^{+}\nonumber\\
      =&\frac{1}{C_n^2}\sum_{i<j}\bar{\mathcal{Z}}_{i,j}^1(Z_i^{+}Z_j^{+})
      +\frac{1}{C_n^2}\sum_{i<j}\bar{\mathcal{Z}}_{i,j}^0
      \otimes\bigl(\ket{\psi^+}\bra{\psi^+}\bigr)_{i,j}\nonumber\\
      &+\frac{1}{C_n^2}\sum_{i<j}\bar{\mathcal{Z}}_{i,j}^0
      \otimes\bigl(\ket{\varphi^+}\bra{\varphi^+}\bigr)_{i,j}\,,
  \end{align}
  where $\ket{\psi^+}=\frac{1}{\sqrt{2}}(\ket{01}+\ket{10})$ and $\ket{\varphi^+}=\frac{1}{\sqrt{2}}(\ket{00}+\ket{11})$ are Bell states.
  Direct calculations show that
  \begin{widetext}
    \begin{align}
      &\frac{1}{C_n^2}\sum_{i<j}\bar{\mathcal{Z}}_{i,j}^1(Z_i^{+}Z_j^{+})
      =\frac{n-2}{n}\mathcal{Z}^1\,,\\
      &\frac{1}{C_n^2}\sum_{i<j}\bar{\mathcal{Z}}_{i,j}^0
      \otimes\bigl(\ket{\psi^+}\bra{\psi^+}\bigr)_{i,j}
      =\frac{1}{2C_n^2}\Big[n\ket{W_n}\bra{W_n}
      +(n-2)\mathcal{Z}^1\Big],\\
      &\frac{1}{C_n^2}\sum_{i<j}\bar{\mathcal{Z}}_{i,j}^0
      \otimes\bigl(\ket{\varphi^+}\bra{\varphi^+}\bigr)_{i,j}
      =\frac{1}{2C_n^2}\Big[C_n^2\ket{\phi_0}\bra{\phi_0}
	+\left(\mathcal{Z}^0
      +\mathcal{Z}^2-\ket{\phi_1}\bra{\phi_1}\right)\!\Big],
    \end{align}
  \end{widetext}
  where
    \begin{align}\label{eq:defphi}
      \ket{\phi_0}&=\,\,{\cal 
      N}\!\left[\sqrt{C_n^2}\ket{D_n^0}+\ket{D_n^2}\right]\!,\\
      \ket{\phi_1}&=\,\,{\cal 
      N}\!\left[\ket{D_n^0}-\sqrt{C_n^2}\ket{D_n^2}\right]\!,
    \end{align}
  with ${\cal N}[\cdot]$ denoting the normalization of the vector inside.
  Note that $\ket{W_n}$ belongs to the support of  $\mathcal{Z}^1$, while $\ket{\phi_0}$
  and $\ket{\phi_1}$ belong to the support of $\mathcal{Z}^0+\mathcal{Z}^2$ and satisfy
  $\braket{\phi_0}{\phi_1}=0$. Therefore, $\Omega_W$ has five different eigenvalues
  $1,1-\frac1{n-1},\frac1{2}+\frac1{n(n-1)},\frac1{n(n-1)},0$
  with multiplicities  1, $n-1$, 1, $\frac1{2}n(n-1)-1$, and
  $2^n-\frac{1}{2}(n^2+n)$, respectively.
  The second largest eigenvalue of $\Omega_W$ is achieved either in the
  support of $\mathcal{Z}^1-\ket{W_n}\bra{W_n}$ or in the eigenvector $\ket{\phi_0}$, that is,
  \begin{equation}
    \lambda_2(\Omega_W)=\max\left\{1-\frac{1}{n-1},\frac{1}{2}+\frac{1}{n(n-1)}\right\}\!,
  \end{equation}
  for all $n\ge 3$.
  Accordingly, the spectral gap from the largest eigenvalue reads
  \begin{equation}
    \nu(\Omega_W)=\min\left\{\frac{1}{n-1},\frac{1}{2}-\frac{1}{n(n-1)}\right\}\!,
  \end{equation}
  for all $n\ge 3$.
  It is easy to see that $\lambda_2(\Omega_W)=\frac{2}{3}$ and $\nu(\Omega_W)=\frac{1}{3}$
  when $n=3$, and the corresponding eigenvector is $\ket{\phi_0}$ in Eq.~\eqref{eq:defphi}.
  When $n\ge 4$, we have
  $\lambda_2(\Omega_W)=1-\frac{1}{n-1}$, $\nu(\Omega_W)=\frac{1}{n-1}$, and the
  corresponding eigenspace coincides with the support of
  $\mathcal{Z}^1-\ket{W_n}\bra{W_n}$, which is an $(n-1)$-dimensional subspace
  spanned by the kets $\ket{\phi_{ij}}$ in Eq.~\eqref{eq:2ndEig}.
\end{proof}

To briefly summarize, Theorem~1 provides an efficient verification protocol for
any $n$-qubit $W$ state $\ket{W_n}$.
In real experiments, the experimenter needs to perform the following procedure in each
run of the verification protocol:
\setlist{rightmargin=1cm}
\begin{itemize}
  \item
    The $n$ parties use shared randomness or classical communication to
    randomly choose any two parties, e.g., $i$ and $j$ (Ida and Jim).
  \item
    All the rest $n-2$ parties perform Pauli-$Z$ measurements, then send
    their outcomes to Ida and Jim.
    \begin{itemize}
      \item
	If the outcome~$1$ does not appear for all the $n-2$ Pauli-$Z$
	measurements, then both Ida and Jim  perform Pauli-$X$ measurements.
	\begin{itemize}
	  \item
	  If the two Pauli-$X$ measurements give the same outcome, then they
	  announce the result ``pass''; otherwise they announce the result
	  ``fail''.
	\end{itemize}
      \item
	If the outcome~$1$ appears exactly once for the $n-2$ Pauli-$Z$
	measurements, then both Ida and Jim perform Pauli-$Z$ measurements.
	\begin{itemize}
	  \item
 If both of them obtain outcome~$0$,
	  then they announce the result ``pass''; otherwise, they announce the
	    result ``fail''.
	\end{itemize}
      \item
	If the outcome~$1$ appears more than once for the $n-2$ Pauli-$Z$
	measurements, then Ida and Jim announce the result ``fail''.
    \end{itemize}
\end{itemize}

\section{Proof of the saturation of the bound in Eq.~\eqref{eq:efficiency}}
In this appendix, we prove the saturation of the bound in  Eq.~\eqref{eq:efficiency} when $n\ge 4$.
In the case $n=3$, direct calculation shows that $\nu(\tilde\Omega_{W_3})=\frac{3}{4}\nu(\Omega_{W_3})$.
See below the lemma.
\begin{lemma}\label{lem:Wefficiency}
    When $n\geq 4$, $\nu(\tilde\Omega_W)=\frac{1}{2}\nu(\Omega_W)$.
\end{lemma}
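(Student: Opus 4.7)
The plan is to complement the already-established lower bound $\nu(\tilde\Omega_W)\ge \tfrac{1}{2}\nu(\Omega_W)=\tfrac{1}{2(n-1)}$ from Eq.~\eqref{eq:efficiency} with a matching upper bound. Since $\ket{W_n}$ is the unique eigenvector of $\tilde\Omega_W$ with eigenvalue $1$ (the target state passes every test by construction), it suffices to exhibit a single eigenvector of $\tilde\Omega_W$ that is orthogonal to $\ket{W_n}$ and carries eigenvalue at least $1-\tfrac{1}{2(n-1)}$.

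The natural candidate is the vector $\ket{\phi_{ij}}$ from Eq.~\eqref{eq:2ndEig}, which spans the second-largest eigenspace of the adaptive operator $\Omega_W$ by Lemma~\ref{lem:Wverify} and is manifestly orthogonal to $\ket{W_n}$. The key simplification is that $\ket{\phi_{ij}}$ lies in the Hamming-weight-$1$ subspace $V_1=\mathrm{supp}(\caZ^1)$, which is invariant under $\tilde\Omega_W$ (every branch of every $\Omega_{i,j}$ preserves $V_1$, as $(XX)^{+}$ preserves weight on states with no excitations outside $\{i,j\}$). On $V_1$ the projector $\caZ^1$ restricts to the identity, so the problem reduces to computing $\tfrac{1}{2}\openone_{V_1}+\tfrac{1}{2C_n^2}\sum_{i<j}\Omega_{i,j}$ on an $n$-dimensional space.

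The main step is then a routine case split for the action of $\Omega_{k,l}$ on a Hamming-weight-$1$ computational basis vector $\ket{u}$ with its single excitation at site $s$. If $s\notin\{k,l\}$, only the $\bar{\cal Z}_{k,l}^1(\openone\openone)_{k,l}$ branch fires and returns $\ket{u}$; if $s\in\{k,l\}$, only the $\bar{\cal Z}_{k,l}^0(XX)_{k,l}^{+}$ branch fires, and the identity $(XX)^{+}\ket{10}=\tfrac{1}{2}(\ket{10}+\ket{01})$ symmetrically redistributes the excitation between sites $k$ and $l$. Summing over $(k,l)$ with $k<l$, combining with $\tfrac{1}{2}\caZ^1$, and using permutation symmetry to collect the off-diagonal contributions along $\ket{W_n}$, one expects $\tilde\Omega_W|_{V_1}=\tfrac{2n-3}{2(n-1)}\openone_{V_1}+(\text{rank-one perturbation along }\ket{W_n})$. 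Every vector in $V_1$ orthogonal to $\ket{W_n}$ is then an eigenvector with eigenvalue $\tfrac{2n-3}{2(n-1)}=1-\tfrac{1}{2(n-1)}$; applying this to $\ket{\phi_{ij}}$ saturates the bound and closes the proof. The only real work is the combinatorial bookkeeping in the summation over pairs $(k,l)$; no conceptual obstacle arises, because Lemma~\ref{lem:Wverify} has already placed the relevant eigenspace inside $V_1$, trivializing the $\caZ^1$ term and allowing the analysis to proceed entirely within the Hamming-weight-$1$ subspace.
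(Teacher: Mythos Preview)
Your proposal is correct and follows essentially the same approach as the paper: both combine the lower bound from Eq.~\eqref{eq:efficiency} with a matching upper bound obtained by plugging the test vector $\ket{\phi_{ij}}$ of Eq.~\eqref{eq:2ndEig} into $\tilde\Omega_W$. The paper simply asserts $\bra{\phi_{ij}}\tilde\Omega_W\ket{\phi_{ij}}=1-\tfrac{1}{2(n-1)}$ via the variational formula, whereas you go further and show that $\ket{\phi_{ij}}$ is in fact an eigenvector by explicitly computing $\tilde\Omega_W$ on the Hamming-weight-$1$ subspace; this extra detail is sound but not needed for the lemma.
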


\begin{proof}
  In the main text, we have already proved that
  $\nu(\tilde\Omega_W)\ge\frac{1}{2}\nu(\Omega_W)$. Hence, to prove the saturation of the bound, we just need to show that
  $\nu(\tilde\Omega_W)\le\frac{1}{2}\nu(\Omega_W)$. Be reminded that
  $\nu(\tilde\Omega_W)$ can be written as
  \begin{equation}
    \nu(\tilde{\Omega}_W):=1-\max_{\braket{\phi}{W_n}=0}\bra{\phi}\tilde{\Omega}_W\ket{\phi}\,.
    \label{eq:vW}
  \end{equation}
  By taking $\ket{\phi}$ to be $\ket{\phi_{ij}}$ defined in
  Eq.~\eqref{eq:2ndEig}, which are orthogonal to $\ket{W_n}$, we get an
  upper bound of $\nu(\tilde{\Omega}_W)$, i.e.,
  \begin{equation}
    \nu(\tilde{\Omega}_W)\le 1-\bra{\phi_{ij}}\tilde{\Omega}_W\ket{\phi_{ij}}
    =\frac{1}{2(n-1)}=\frac{1}{2}\nu(\Omega_W)\,.
    \label{eq:vWupper}
  \end{equation}
This inequality completes the proof.
\end{proof}

\section{Simulated experiments on quantum state verification}
Here we show how to perform simulated experiments on quantum state verification (QSV).
As a demonstration, we use the verification protocols for $W$ states and Dicke states
as characterized by the operators
$\Omega_{W/D}=\frac1{C_n^2}\sum_{i<j}\Omega_{i,j}^{\rightarrow}$.

To set the input state, we add noise to the target state $\ket{\psi}$ such that
\begin{equation}\label{eq:noisyPsi}
  \ket{\psi'}=\sqrt{1-\epsilon}\ket{\psi}+\sqrt{\epsilon}\ket{\tau}\,,
\end{equation}
where the noisy state $\ket{\tau}$ is chosen in the vector space corresponding
to the second largest eigenvalue of $\Omega_{W/D}$.
Other kinds of noise, including random noise, would be easier to detect.
Then, for each input state $\ket{\psi'}$, we perform one of the $C_n^2$ tests
$\Omega_{i,j}^{\rightarrow}$ randomly with probability $1/{C_n^2}$ each.
If \ket{\psi'} passes the test (with probability
$\tr(\Omega_{i,j}^{\rightarrow}\ket{\psi'}\bra{\psi'})$), we continue with the next one.
Otherwise, the verification protocol ends and we record the number of ``pass" instances.
This process is repeated many times, from which we calculate the minimum number of tests
required to achieve a given confidence level $1-\delta$.
Specifically, the simulation procedure can be formulated as in the following algorithm.
\begin{center}
\begin{algorithm}[H]
\caption{Simulated experiments on QSV}\label{alg:sim}
\begin{algorithmic}[1]
\STATEx {\bf Input:} The target state $\ket{\psi}$, the noise \ket{\tau}, the infidelity
$\epsilon$, and the confidence level $1-\delta$.
\STATEx {\bf Objective:} Determine the number of tests $N$ required to
verify $\ket{\psi}$ within infidelity $\epsilon$ and confidence $1-\delta$.	
\STATE {\bf Init:} Set the input state $\ket{\psi'}$ as in
Eq.~\eqref{eq:noisyPsi}.		\STATE{\bf Measure:} Perform one of the
$C_n^2$ tests $\Omega_{i,j}^{\rightarrow}$ on $\ket{\psi'}$ randomly with
probability $1/{C_n^2}$ each.
\STATE{\bf Count:} If $\ket{\psi'}$ passes the test, then repeat step 2.
Otherwise, end the verification procedure and record the number of ``pass" instances $N_i$.
\STATE{\bf Loop:} Repeat steps 2 and 3 above $M$ times and record the $M$ numbers  $N_i$ for $i=1,2,\ldots,M$.
\STATE{\bf Output:}  Arrange $N_i$ in decreasing order and then output $N:=N_{\lfloor\delta M\rfloor}$.
\end{algorithmic}
\end{algorithm}
\end{center}

\begin{figure}[t]
\centering
\includegraphics[width=.9\columnwidth]{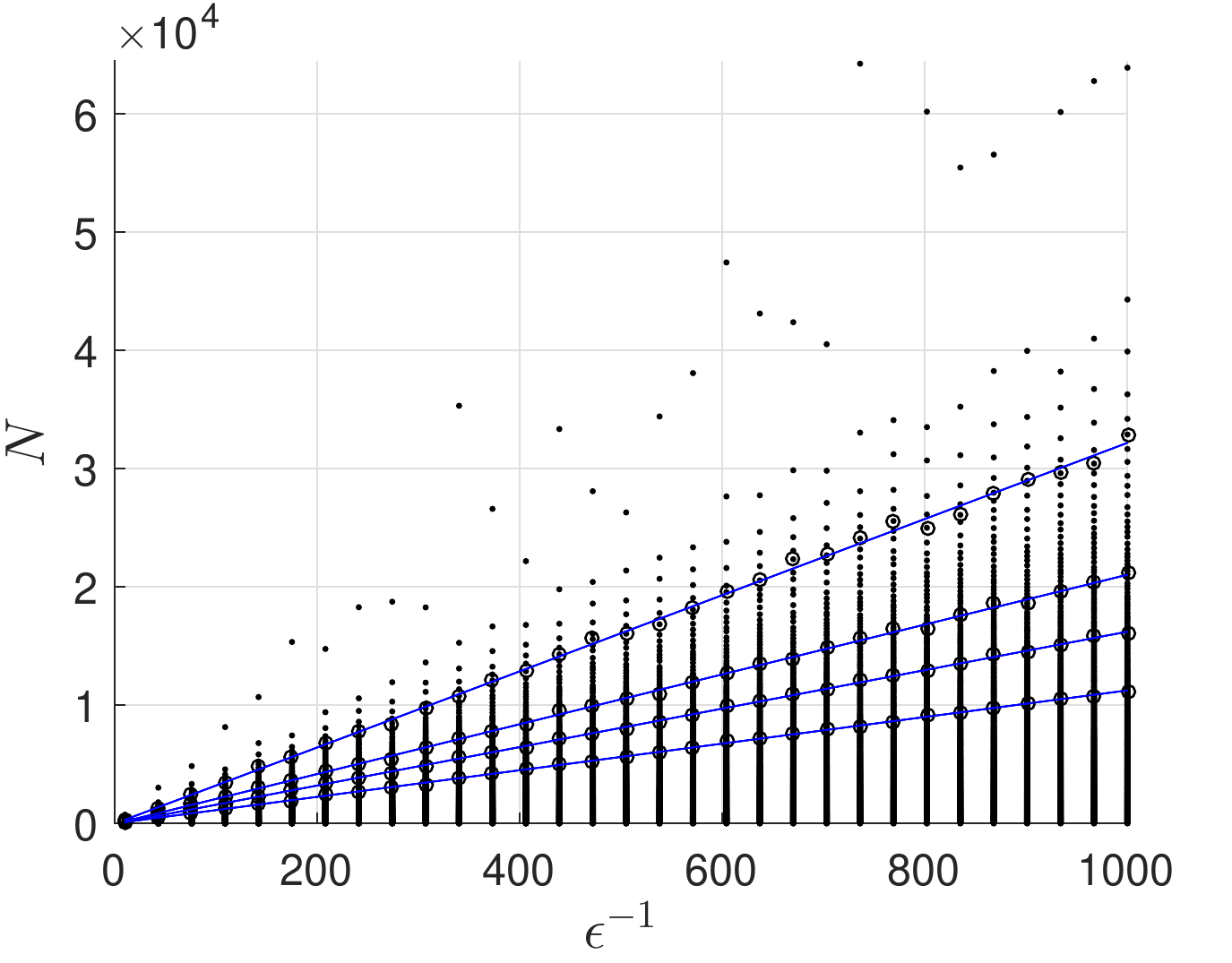}
  \caption{\label{fig:sim_W8}
  Simulation results on the adaptive verification of $\ket{W_8}$.
  The two axes denote the number of tests $N$ and the reciprocal of the infidelity $\epsilon^{-1}$.
  For each $\epsilon$, the simulation is  repeated $M=10000$ times, but only
  500 points are shown in the plot for clarity.
  The open circles denote the minimum number of tests required to  achieve
  infidelity $\epsilon$ and confidence $1-\delta$, and the four blue lines are fitted for
  $\delta=0.01, 0.05, 0.1, 0.2$ (top-down), respectively.
  Here the number of tests can be approximated by the formula
  $N\approx7.0306(\pm 0.0188)\epsilon^{-1}\ln\delta^{-1}$, which is very close
  to the theoretical prediction.
  }
\end{figure}
\setlength{\tabcolsep}{0.85em}
\renewcommand{\arraystretch}{1.2}
\begin{table}
  \caption{\label{tab:sim}
  Simulation results on the verification of $W$ and Dicke states.
  For each state, the verification procedure is repeated $M=10000$ times for both
  the adaptive and nonadaptive protocols. The table shows the fitted values
  of the parameter $\frac{1}{\nu(\Omega)}$ featuring
  in the formula $N\approx\frac{1}{\nu(\Omega)}\epsilon^{-1}\ln\delta^{-1}$.
  The values inside the parentheses are the standard deviations calculated from
  100 different instances of $\delta$ taken uniformly from the interval $0.01$ to $0.2$.
  }
\begin{tabular}{c c c}
\hline\hline
 State            &Adaptive              &Nonadaptive    \\
\hline
 \ket{W_{3}}      &$3.0031(\pm 0.0169)$ &$3.9806(\pm 0.0109)$\\
 \ket{W_{4}}      &$3.0088(\pm 0.0066)$ &$6.0640(\pm 0.0362)$\\
 \ket{W_{5}}      &$3.9847(\pm 0.0076)$ &$7.9825(\pm 0.0379)$\\
 \ket{W_{6}}      &$4.9916(\pm 0.0186)$ &$9.9982(\pm 0.0310)$\\
 \ket{W_{7}}      &$5.9984(\pm 0.0149)$ &$11.9445(\pm 0.0495)$\\
 \ket{W_{8}}      &$7.0306(\pm 0.0188)$ &$14.0621(\pm 0.0262)$\\
 \ket{D_{4}^{2}}  &$2.9931(\pm 0.0058)$ &$5.9411(\pm 0.0181)$\\
 \ket{D_{5}^{2}}  &$4.0064(\pm 0.0193)$ &$7.9722(\pm 0.0245)$\\
 \ket{D_{6}^{2}}  &$4.9981(\pm 0.0110)$ &$10.0371(\pm 0.0233)$\\
 \ket{D_{6}^{3}}  &$4.9654(\pm 0.0148)$ &$10.0147(\pm 0.0173)$\\
 \ket{D_{7}^{2}}  &$6.0131(\pm 0.0118)$ &$11.9302(\pm 0.0240)$\\
 \ket{D_{7}^{3}}  &$5.9610(\pm 0.0111)$ &$11.9553(\pm 0.0366)$\\
 \ket{D_{8}^{2}}  &$6.9554(\pm 0.0378)$ &$14.0045(\pm 0.0345)$\\
 \ket{D_{8}^{4}}  &$6.9554(\pm 0.0361)$ &$14.0669(\pm 0.0289)$\\
\hline\hline
\end{tabular}
\end{table}

As an example, the simulation  results on the verification of $\ket{W_8}$ using the
adaptive protocol are shown in Fig.~\ref{fig:sim_W8}.
By numerical fitting we get the approximation $\frac{1}{\nu(\Omega)}\approx7.0306(\pm 0.0188)$,
which is very close to the theoretical value of 7.
For more simulation results, see Table~\ref{tab:sim}.

\begin{figure}
  \includegraphics[width=.9\columnwidth]{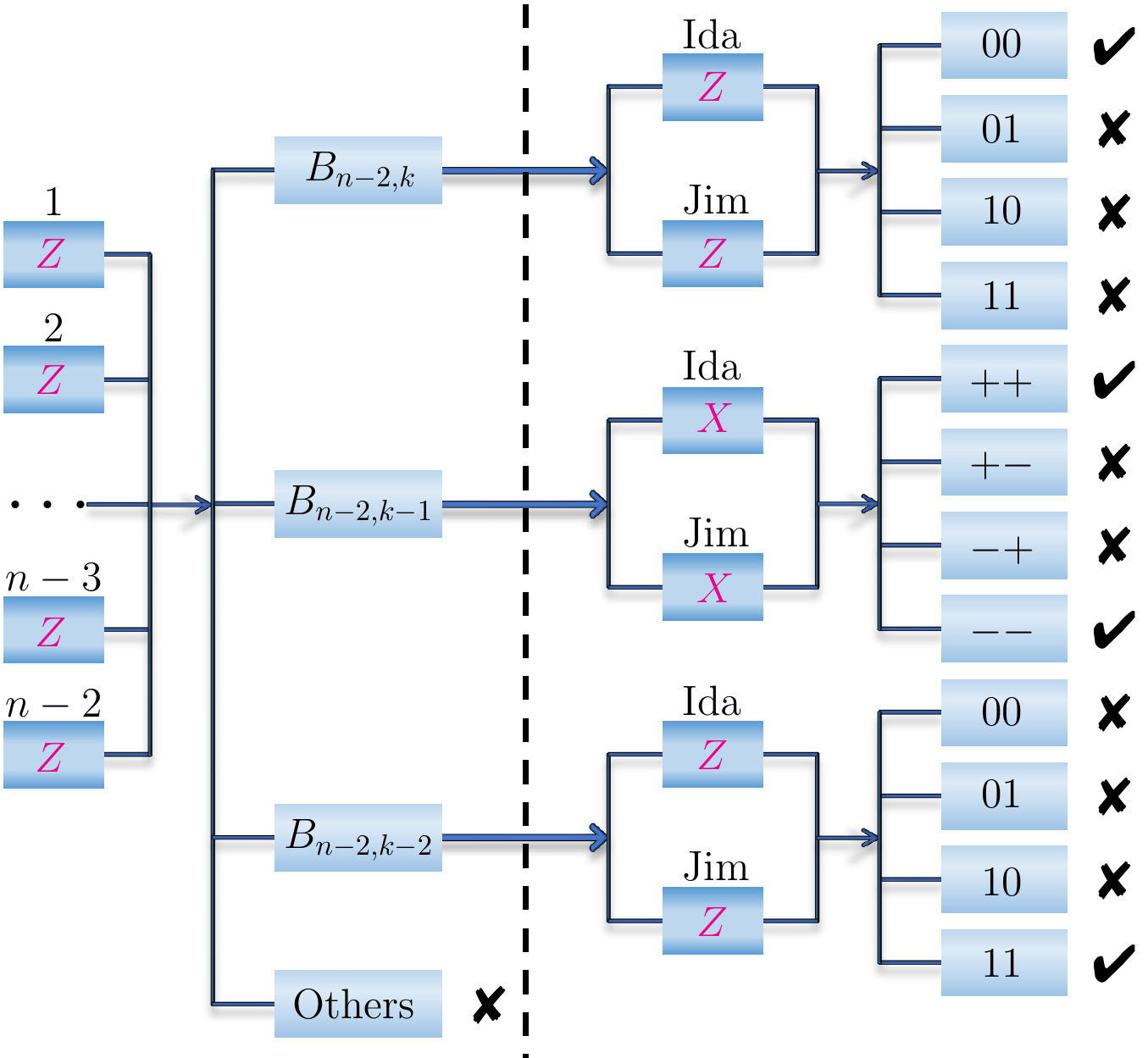}
  \caption{\label{fig:schemeD}
  Schematic view of the adaptive  protocol for verifying the Dicke state
  $\ket{D_{n}^k}$ as in Theorem~\ref{theo:D}.
  The protocol is two-step adaptive as indicated by the dashed vertical line.
  For any two qubits $i$ and $j$ (Ida and Jim) chosen \textit{a priori},
  the outcomes of $Z$ measurements on the other $n-2$ qubits determine which measurements
  on them to perform.
  Recall that $B_{n,k}$ denotes the set of all strings in $\{0,1\}^n$ that have Hamming weight $k$, where 0 and 1 correspond to eigenvalues 1 and $-1$ of $Z$, respectively.
  }
\end{figure}
%

\section{Proof of Theorem~2}
In this Appendix, we prove Theorem~2 and give more details on the adaptive verification
protocol for Dicke states.
First, see Fig.~\ref{fig:schemeD} for a schematic view of this verification protocol.
Theorem~2 is a consequence of the following lemma, which is a generalization of
Lemma~\ref{lem:Wverify}.

\begin{lemma}
	For $n\ge4$, the second largest eigenvalue of $\Omega_D$ in Eq.~\eqref{eq:OmegaDT} is
	$\lambda_2(\Omega_D)=1-\frac1{n-1}$ with multiplicity $n-1$, and the
	corresponding eigenspace is spanned by
	\begin{equation}\label{eq:2ndEigD}
	\ket{\phi_{ij}}=\ket{\psi^{-}}_{i,j}\otimes\ket{D_{n-2}^{k-1}}\,,\quad 1\leq i<j\leq n\,,
	\end{equation}
	where $\ket{\psi^-}=\frac{1}{\sqrt{2}}(\ket{01}-\ket{10})$ is the singlet.
\end{lemma}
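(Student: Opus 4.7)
The plan is to follow the template of Lemma~\ref{lem:Wverify}, decomposing $\Omega_D$ along Hamming-weight sectors $\mathcal{Z}^{k'}$. The two $Z$-based terms in $\Omega_{i,j}^{\rightarrow}$ act only on $\mathcal{Z}^k$. Writing $(XX)_{i,j}^{+}=\ket{\psi^{+}}\bra{\psi^{+}}_{i,j}+\ket{\varphi^{+}}\bra{\varphi^{+}}_{i,j}$, the $\ket{\psi^+}$ piece also keeps the state in $\mathcal{Z}^k$ (the other $n-2$ qubits carry weight $k-1$ and $\ket{\psi^+}$ has weight one), while the $\ket{\varphi^+}$ piece only couples $\mathcal{Z}^{k-1}$ with $\mathcal{Z}^{k+1}$; all other weight sectors lie in the kernel. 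So $\Omega_D=\Omega_D|_{\mathcal{Z}^k}\oplus\Omega_D|_{\mathcal{Z}^{k-1}\oplus\mathcal{Z}^{k+1}}\oplus 0$, and I handle the two nontrivial blocks separately.

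On $\mathcal{Z}^k$, counting zero- and one-pairs in each weight-$k$ string yields $\sum_{i<j}[\bar{\mathcal{Z}}_{i,j}^k(Z_i^{+}Z_j^{+})+\bar{\mathcal{Z}}_{i,j}^{k-2}(Z_i^{-}Z_j^{-})]=\bigl[\binom{n-k}{2}+\binom{k}{2}\bigr]\openone_{\mathcal{Z}^k}$, while the $\ket{\psi^+}$ part of the third term evaluates in the computational basis to $\tfrac{k(n-k)}{2}\openone_{\mathcal{Z}^k}+\tfrac{1}{2}A_J$, with $A_J$ the adjacency matrix of the Johnson graph $J(n,k)$. Inserting the classical eigenvalues $e_j=(k-j)(n-k-j)-j$ of $A_J$ (multiplicities $\binom{n}{j}-\binom{n}{j-1}$) I obtain $\Omega_D|_{\mathcal{Z}^k}$-eigenvalues $\lambda_j=1-j(n+1-j)/[n(n-1)]$ for $j=0,\dots,\min(k,n-k)$, which is strictly decreasing on this range; only $\lambda_0=1$ (on $\ket{D_n^k}$) and $\lambda_1=1-\tfrac{1}{n-1}$ (multiplicity $n-1$) reach or exceed the claimed second largest.

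To identify the $\lambda_1$-eigenspace with $\mathrm{span}\{\ket{\phi_{ij}}\}$, set $F_i=\sum_{S\ni i,\,|S|=k}\ket{S}$. A short expansion of $\ket{\phi_{ij}}=\ket{\psi^-}_{i,j}\otimes\ket{D_{n-2}^{k-1}}$ in the computational basis gives $\ket{\phi_{ij}}=(F_j-F_i)/\sqrt{2\binom{n-2}{k-1}}$, and a direct count yields $A_J F_i=e_1 F_i+k\sum_{|S|=k}\ket{S}$; hence $A_J(F_j-F_i)=e_1(F_j-F_i)$, so each $\ket{\phi_{ij}}$ lies in the $j=1$ eigenspace of $A_J$. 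The Gram matrix of $\{F_1,\dots,F_n\}$ equals $\binom{n-2}{k-2}J+\binom{n-2}{k-1}\openone$ (with $J$ the all-ones matrix), which is positive definite, so the $F_i$ are linearly independent and $\dim\mathrm{span}\{\ket{\phi_{ij}}\}=n-1$, matching the multiplicity of $\lambda_1$.

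Finally, on $\mathcal{Z}^{k-1}\oplus\mathcal{Z}^{k+1}$ only the $\ket{\varphi^+}\bra{\varphi^+}$ piece acts. For a unit $\ket{\psi}=\sum_S c_S\ket{S}$ in this subspace, using $\langle v\otimes\varphi^+|\psi\rangle=(c_T+c_{T\cup\{i,j\}})/\sqrt{2}$ one has
\begin{equation*}
  \bra{\psi}\Omega_D\ket{\psi}=\frac{1}{2\binom{n}{2}}\sum_{\substack{A\subset B\\ |A|=k-1,\,|B|=k+1}}|c_A+c_B|^2.
\end{equation*}
Applying $|c_A+c_B|^2\le 2(|c_A|^2+|c_B|^2)$ with $\#\{B\supset A\}=\binom{n-k+1}{2}$ and $\#\{A\subset B\}=\binom{k+1}{2}$ bounds this (uniformly in $2\le k\le n-2$) by $\binom{n-1}{2}/\binom{n}{2}=(n-2)/n<(n-2)/(n-1)=1-\tfrac{1}{n-1}$, so this block contributes no eigenvalue at the target level. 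The main technical work is the Johnson-graph spectral calculation together with the Gram-matrix dimension count in paragraphs two and three; the bound in paragraph four is an elementary Cauchy--Schwarz-style estimate.
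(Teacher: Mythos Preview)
Your argument is correct and follows the same overall architecture as the paper: block-decompose $\Omega_D$ along Hamming-weight sectors, identify the $\mathcal{Z}^k$ block with (a shift of) the Johnson-graph adjacency matrix, and control the $\mathcal{Z}^{k-1}\oplus\mathcal{Z}^{k+1}$ block separately. On the $\mathcal{Z}^k$ block your computation matches the paper's $M_1$, and your explicit identification of the $\lambda_1$-eigenspace via $F_i=\sum_{S\ni i}\ket{S}$ together with the Gram-matrix rank count is a more detailed version of what the paper leaves as ``straightforward to verify.''

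The one genuine difference is the treatment of the off-diagonal block. The paper exhibits the Perron vector $\ket{\phi}\propto\sqrt{C_n^{k+1}}\,\ket{D_n^{k-1}}+\sqrt{C_n^{k-1}}\,\ket{D_n^{k+1}}$, checks irreducibility, invokes Perron--Frobenius, and reads off the \emph{exact} top eigenvalue $\tfrac{1}{2}+[n-k(n-k)]/[n(n-1)]$ of that block. You instead use the elementary inequality $|c_A+c_B|^2\le 2(|c_A|^2+|c_B|^2)$ together with the degree counts $\binom{n-k+1}{2}$ and $\binom{k+1}{2}$ to obtain the upper bound $(n-2)/n$, which is looser than the paper's exact value but still strictly below $1-\tfrac{1}{n-1}$ for all $n\ge4$. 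Your route is more self-contained (no Perron--Frobenius, no connectedness check) and entirely sufficient for the lemma as stated; the paper's route additionally yields the exact second candidate in Eq.~\eqref{eq:gapD}, which is not needed for the conclusion but gives finer spectral information. As in the paper, your bound on this block uses $2\le k\le n-2$; the cases $k=1,n-1$ fall back on Lemma~\ref{lem:Wverify}.
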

\begin{proof}
When $k=1$ or $k=n-1$, the conclusion follows from  Lemma~\ref{lem:Wverify},
so here we can assume $2\leq k\leq n-2$.
Recall that $\Omega_D$ is defined as
\begin{widetext}
\begin{align}\label{eq:decomOmegaD}
      \Omega_D
      =&\frac{1}{C_n^2}\sum_{i<j}\big[\bar{\mathcal{Z}}_{i,j}^{k}(Z_i^{+}Z_j^{+})
      +\bar{\mathcal{Z}}_{i,j}^{k-2}(Z_i^{-}Z_j^{-})\big]
      +\frac{1}{C_n^2}\sum_{i<j}\bar{\mathcal{Z}}_{i,j}^{k-1}(XX)_{i,j}^{+}\nonumber\\
      =&\frac{C_{n-2}^k+C_{n-2}^{k-2}}{C_n^k}\mathcal{Z}^k
      +\frac{1}{C_n^2}\sum_{i<j}\bar{\mathcal{Z}}_{i,j}^{k-1}
      \otimes\bigl(\ket{\psi^+}\bra{\psi^+}\bigr)_{i,j}+\frac{1}{C_n^2}\sum_{i<j}\bar{\mathcal{Z}}_{i,j}^{k-1}
      \otimes\bigl(\ket{\varphi^+}\bra{\varphi^+}\bigr)_{i,j}\nonumber\\
      =&\frac{1}{n(n-1)}\big(M_1+M_2\big)\,,
\end{align}
where
\begin{align}
M_1=&\big[n(n-1)-k(n-k)\big]\caZ^k+\sum_{\substack{u,v\in B_{n,k}\\ u-v\in B_{n,2} }}\ket{u}\bra{v}
=\big[n(n-1)-k(n-k)\big]\caZ^k+\sum_{u,v\in B_{n,k} }A_{u
v}\ket{u}\bra{v}\,,\label{eq:M1}\\
M_2=&\frac{(n-k)(n-k+1)}{2}\sum_{u\in B_{n,k-1} }\ket{u}\bra{u}
+\frac{k(k+1)}{2}\sum_{v\in B_{n,k+1} }\ket{v}\bra{v}
+\sum_{\substack{u\in B_{n,k-1}\\v\in B_{n,k+1}\\u-v\in B_{n,2}} 
}\big(\ket{u}\bra{v}+\ket{v}\bra{u}\big)\,,
\label{eq:M2}
\end{align}
\end{widetext}
and $\ket{\psi^+}=\frac{1}{\sqrt{2}}(\ket{01}+\ket{10})$ and 
$\ket{\varphi^+}=\frac{1}{\sqrt{2}}(\ket{00}+\ket{11})$ are Bell states.
Here $B_{n,k}$ denotes the set of all strings in $\{0,1\}^n$ that have Hamming 
weight $k$,
and the bitwise operation $u-v$ is modulo~2; the coefficient matrix $(A_{uv})$ for $u,v\in B_{n,k}$ happens to be the
adjacency matrix of the Johnson graph $J(n,k)$ \cite{BrouCN89}.
Note that $M_1$ and  $M_2$ are hermitian and have orthogonal supports, so both of them are
positive semidefinite given that $\Omega_D$ is positive semidefinite by construction.

According to Theorem 9.1.2 in Ref.~\cite{BrouCN89}, the distinct eigenvalues of $A$ and
corresponding multiplicities read
\begin{equation}
(k-j)(n-k-j)-j, \quad C_n^j-C_n^{j-1},
\end{equation}
for all $j=0,1,\ldots, \min\{k,n-k\}$,
where it is understood that $C_n^{-1}=0$.
Therefore, the  two largest eigenvalues of $M_1$ read
\begin{equation}\label{eq:M1lambda}
  \begin{aligned}
    \lambda_1(M_1)&=n(n-1)\,, \\
    \lambda_2(M_1)&=n(n-1)-n=n(n-2)\,,
  \end{aligned}
\end{equation}
which have multiplicities 1 and $n-1$, respectively.

Now, we consider $M_2$. Direct calculations show that $M_2$ has an eigenvector
\begin{equation}
  \ket{\phi}={\cal N}\!\left[\sqrt{C_n^{k+1}}\ket{D_n^{k-1}}
  +\sqrt{C_n^{k-1}}\ket{D_n^{k+1}}\right]\!.
  \label{eq:defphiD}
\end{equation}
As $M_2$ is irreducible in the subspace spanned by $\ket{u}$ with $u\in B_{n,k-1}$
or $u\in B_{n,k+1}$, i.e., the graph corresponding to the third term of
$M_2$ in Eq.~\eqref{eq:M2} is connected,  Perron-Frobenius theorem (see
e.g., Chapter~8 in Ref.~\cite{Meyer2000}) implies that the eigenvalue
corresponding to the ket in
Eq.~\eqref{eq:defphiD} is the largest (and nondegenerate) eigenvalue of $M_2$,
which reads
\begin{equation}
  \lambda_1(M_2)=\frac{1}{2}n(n+1)+k(k-n)\,.
  \label{eq:eigM_2}
\end{equation}
In conjunction with Eqs.~\eqref{eq:decomOmegaD} and \eqref{eq:M1lambda}, we can deduce
the second largest eigenvalue and its spectral gap from the largest eigenvalue,
\begin{align}
 &\lambda_2(\Omega_D)=\max\left\{1-\frac{1}{n-1},
 \frac1{2}+\frac{k(k-n)+n}{n(n-1)}\right\}\!,\\
 &\nu(\Omega_D)=\min\left\{\frac{1}{n-1},\frac1{2}-\frac{k(k-n)+n}{n(n-1)}\right\}\!.
  \label{eq:gapD}
\end{align}
The above equations can be simplified by virtue of the assumption $n\ge 4$, with the result
\begin{align}
  \lambda_2(\Omega_D)&=1-\frac{1}{n-1}\,,\\
  \nu(\Omega_D)&=\frac{1}{n-1}\,;
\end{align}
in addition, the second largest eigenvalue has multiplicity $n-1$.
Furthermore, it is straightforward to verify that the kets $\ket{\phi_{ij}}$ in
Eq.~\eqref{eq:2ndEigD} are eigenvectors of $\Omega_D$ with eigenvalue
$1-\frac{1}{n-1}$. The span of all $\ket{\phi_{ij}}$ for $1\leq i<j\leq n$ has
dimension $n-1$, which accounts for the multiplicity $n-1$ of the second largest eigenvalue.
\end{proof}

To briefly summarize, Theorem~2 provides an efficient verification protocol for
any Dicke state $\ket{D_n^k}$.
In real experiments, the experimenter performs the following procedure in each
run of the verification protocol:
\setlist{rightmargin=1cm}
\begin{itemize}
  \item
    The $n$ parties use shared randomness or classical communication to
    randomly choose  two parties, e.g., $i$ and $j$ (Ida and Jim).
  \item
    All the rest $n-2$ parties perform Pauli-$Z$ measurements, then send
    their outcomes to Ida and Jim.
    \begin{itemize}
      \item
	If the outcome~$1$ appears $k$ or $k-2$ times for the $n-2$ Pauli-$Z$
	measurements, then Ida and Jim also perform Pauli-$Z$ measurements.
	\begin{itemize}
	  \item
	    If exactly $k$ of the $n$ Pauli-$Z$ measurements give
	    outcome~$1$, then they announce the result ``pass''; otherwise, they
	    announce the result ``fail''.
	\end{itemize}
      \item
	If the outcome~$1$ appears $k-1$ times for the $n-2$ Pauli-$Z$
	measurements, then Ida and Jim perform Pauli-$X$ measurements.
	\begin{itemize}
	  \item
	    If the two Pauli-$X$ measurements give the same outcome, then they
	    announce the result ``pass''; otherwise, they announce the result
	    ``fail''.
	\end{itemize}
      \item
	If the outcome~$1$ appears less than $k-2$ or more than $k$ times for
	the $n-2$ Pauli-$Z$ measurements, then Ida and Jim announce the result
	``fail''.
    \end{itemize}
\end{itemize}

\section{Proof of the saturation of the bound in Eq.~\eqref{eq:efficiencyD}}
Similar to the case of $W$ states, we can prove the saturation of the bound in
Eq.~\eqref{eq:efficiencyD} when $n\geq 4$, as shown in the following lemma.
\begin{lemma}\label{lem:Defficiency}
    When $n\geq 4$, $\nu(\tilde\Omega_D)=\frac{1}{2}\nu(\Omega_D)$.
\end{lemma}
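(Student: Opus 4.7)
The lower bound $\nu(\tilde\Omega_D)\ge\tfrac{1}{2}\nu(\Omega_D)=\tfrac{1}{2(n-1)}$ is already established in Eq.~\eqref{eq:efficiencyD}, so my plan is to close the gap by exhibiting a test vector orthogonal to $\ket{D_n^k}$ whose expectation value under $\tilde\Omega_D$ equals $1-\tfrac{1}{2(n-1)}$. Imitating Lemma~\ref{lem:Wefficiency}, the natural candidate is the eigenvector of $\Omega_D$ at eigenvalue $1-\tfrac{1}{n-1}$ found in the proof of Theorem~\ref{theo:D}, namely $\ket{\phi_{ij}}=\ket{\psi^-}_{i,j}\otimes\ket{D_{n-2}^{k-1}}_{\overline{ij}}$. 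It is antisymmetric under exchange of $i$ and $j$ while $\ket{D_n^k}$ is symmetric, so orthogonality $\braket{D_n^k}{\phi_{ij}}=0$ is automatic.

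I will use the decomposition $\tilde\Omega_D=\tfrac{1}{2}\caZ^k+\tfrac{1}{2C_n^2}\sum_{i'<j'}\Omega_{i',j'}$ from Eq.~\eqref{eq:OmegaDN}. Because $\ket{\phi_{ij}}$ has total Hamming weight exactly $k$, it lies in the support of $\caZ^k$ and contributes $\tfrac{1}{2}$ immediately. The remaining task is to evaluate $\bra{\phi_{ij}}\Omega_{i',j'}\ket{\phi_{ij}}$ for each pair $(i',j')$ and average, which I will split by $|\{i,j\}\cap\{i',j'\}|\in\{0,1,2\}$.

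The boundary cases are quick. If $(i',j')=(i,j)$, the spectator qubits carry weight $k-1$, so only $\bar{\mathcal{Z}}_{i,j}^{k-1}(XX)_{i,j}^+$ acts, and since $(XX)^+\ket{\psi^-}=0$ the contribution is $0$. If $(i',j')$ is disjoint from $(i,j)$, expanding $\ket{D_{n-2}^{k-1}}$ by the value on $(i',j')$ shows that the $\ket{00}$ and $\ket{11}$ sectors pass through the identity branches while the $\ket{\psi^+}_{i',j'}$ component in the $\bar{\mathcal{Z}}_{i',j'}^{k-1}$ sector is fixed by $(XX)^+_{i',j'}$; hence $\Omega_{i',j'}\ket{\phi_{ij}}=\ket{\phi_{ij}}$, contributing $1$. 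The main obstacle is the one-overlap case: one must expand $\ket{\phi_{ij}}$ simultaneously in the singlet basis on $\{i,j\}$ and in the value of the single shared qubit, then track how $(XX)^+_{i',j'}$ projects $\ket{01}$ and $\ket{10}$ onto the common symmetric combination. Cross terms between spectator sectors of different Hamming weights vanish by orthogonality of Dicke components, leaving a short computation that yields $\tfrac{3}{4}$.

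Finally, the three cases contain $1$, $C_{n-2}^2$, and $2(n-2)$ pairs respectively, so
\begin{equation*}
\sum_{i'<j'}\bra{\phi_{ij}}\Omega_{i',j'}\ket{\phi_{ij}}=\frac{(n-2)(n-3)}{2}+\frac{3(n-2)}{2}=\frac{n(n-2)}{2}.
\end{equation*}
Therefore $\bra{\phi_{ij}}\tilde\Omega_D\ket{\phi_{ij}}=\tfrac{1}{2}+\tfrac{n-2}{2(n-1)}=\tfrac{2n-3}{2(n-1)}$, which gives $\nu(\tilde\Omega_D)\le\tfrac{1}{2(n-1)}=\tfrac{1}{2}\nu(\Omega_D)$ and matches the lower bound. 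The whole argument hinges on the one-overlap computation; everything else is combinatorial bookkeeping.
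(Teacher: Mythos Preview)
Your proposal is correct and follows the same strategy as the paper: both use the test vector $\ket{\phi_{ij}}$ from Eq.~\eqref{eq:2ndEigD} to produce the upper bound $\bra{\phi_{ij}}\tilde\Omega_D\ket{\phi_{ij}}=1-\tfrac{1}{2(n-1)}$, which the paper simply asserts while you carry out the case split explicitly. For a shorter route to that value, note that $\Omega_{i',j'}$ and $\Omega_{i',j'}^{\rightarrow}$ coincide on the weight-$k$ subspace $\caZ^k$ (the identity branches of $\Omega_{i',j'}$ replace $Z^+Z^+$ and $Z^-Z^-$, which already act as the identity there), so $\tilde\Omega_D\big|_{\caZ^k}=\tfrac{1}{2}\bigl(\caZ^k+\Omega_D\big|_{\caZ^k}\bigr)$ and the known eigenvalue $1-\tfrac{1}{n-1}$ of $\Omega_D$ at $\ket{\phi_{ij}}\in\caZ^k$ immediately gives $\tfrac{1}{2}+\tfrac{1}{2}\bigl(1-\tfrac{1}{n-1}\bigr)=1-\tfrac{1}{2(n-1)}$ without the pair-by-pair calculation.
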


\begin{proof}
  In the main text, we have already proved that
  $\nu(\tilde\Omega_D)\ge\frac{1}{2}\nu(\Omega_D)$. Hence, to prove the saturation of the bound,
  we just need to show that
  $\nu(\tilde\Omega_D)\le\frac{1}{2}\nu(\Omega_D)$. Note that
  $\nu(\tilde\Omega_D)$ can be written as
  \begin{equation}
    \nu(\tilde{\Omega}_D):=1-\max_{\braket{\phi}{D_n^k}=0}\bra{\phi}\tilde{\Omega}_D\ket{\phi}\,.
    \label{eq:vD}
  \end{equation}
  By taking $\ket{\phi}$ to be $\ket{\phi_{ij}}$ defined in
  Eq.~\eqref{eq:2ndEigD}, which are orthogonal to $\ket{D_n^k}$, we get an
  upper bound of $\nu(\tilde{\Omega}_D)$, i.e.,
  \begin{equation}
    \nu(\tilde{\Omega}_D)\le 1-\bra{\phi_{ij}}\tilde{\Omega}_D\ket{\phi_{ij}}
    =\frac{1}{2(n-1)}=\frac{1}{2}\nu(\Omega_D)\,,
    \label{eq:vDupper}
  \end{equation}
which confirms the lemma.
\end{proof}

\section{Adaptive verification of the Dicke state $\ket{D_4^2}$}
The state $\ket{D_4^2}$ has $k=2$ excitations,
and the verification operator $\Omega_{D_4^2}$ of the adaptive protocol in
Theorem~\ref{theo:D} takes on the form
\begin{widetext}
\begin{align}
  \Omega_{D_4^2}=\frac1{6}\Bigl[&Z_4^{-}Z_3^{+}(XX)_{2,1}^{+}+Z_4^{+}Z_3^{-}(XX)_{2,1}^{+}
  +Z_4^{-}Z_3^{-}Z_2^{+}Z_1^{+}+Z_4^{+}Z_3^{+}Z_2^{-}Z_1^{-}\nonumber\\
            +&Z_4^{-}Z_2^{+}(XX)_{3,1}^{+}+Z_4^{+}Z_2^{-}(XX)_{3,1}^{+}
  +Z_4^{-}Z_2^{-}Z_3^{+}Z_1^{+}+Z_4^{+}Z_2^{+}Z_3^{-}Z_1^{-}\nonumber\\
            +&Z_3^{-}Z_2^{+}(XX)_{4,1}^{+}+Z_3^{+}Z_2^{-}(XX)_{4,1}^{+}
  +Z_3^{-}Z_2^{-}Z_4^{+}Z_1^{+}+Z_3^{+}Z_2^{+}Z_4^{-}Z_1^{-}\nonumber\\
            +&Z_4^{-}Z_1^{+}(XX)_{3,2}^{+}+Z_4^{+}Z_1^{-}(XX)_{3,2}^{+}
  +Z_4^{-}Z_1^{-}Z_3^{+}Z_2^{+}+Z_4^{+}Z_1^{+}Z_3^{-}Z_2^{-}\nonumber\\
            +&Z_3^{-}Z_1^{+}(XX)_{4,2}^{+}+Z_3^{+}Z_1^{-}(XX)_{4,2}^{+}
  +Z_3^{-}Z_1^{-}Z_4^{+}Z_2^{+}+Z_3^{+}Z_1^{+}Z_4^{-}Z_2^{-}\nonumber\\
            +&Z_2^{-}Z_1^{+}(XX)_{4,3}^{+}+Z_2^{+}Z_1^{-}(XX)_{4,3}^{+}
  +Z_2^{-}Z_1^{-}Z_4^{+}Z_3^{+}+Z_2^{+}Z_1^{+}Z_4^{-}Z_3^{-}\Bigr].
\end{align}
\end{widetext}
The second largest eigenvalue of $\Omega_{D_4^2}$ is $\lambda_2(\Omega_{D_4^2})=\frac{2}{3}$,
and the spectral gap is $\nu(\Omega_{D_4^2})=\frac1{3}$. Therefore, the number of tests
required to verify  $\ket{D_4^2}$ within infidelity $\epsilon$ and confidence $1-\delta$
is $N\approx3\epsilon^{-1}\ln\delta^{-1}$.
Simulation results on the verification of $\ket{D_4^2}$ can be found in Appendix~C.


%

\end{document}